\documentclass[12pt,reqno]{amsart}
\usepackage{amssymb}
\usepackage{amscd}
\usepackage{amsxtra}
\usepackage[mathscr]{eucal}
\usepackage[all]{xy}
\usepackage{amsmath}
\newtheorem{mydef}{Definition}
\newtheorem{mythm}{Theorem}
\newtheorem{mypro}{Proposition}
\newtheorem{mylem}{Lemma}
\newtheorem{myconj}{Conjecture}

\usepackage[utf8]{inputenc}
\usepackage[english]{babel}
\usepackage[backend=biber,style=alphabetic,sorting=ynt]{biblatex}
\usepackage{tikz}

\newcommand{\unit}{\text{\textbf{1}}}

\newcommand{\id}{\mbox{\rm id\,}}

\theoremstyle{plain}
\numberwithin{equation}{section}

\theoremstyle{definition}

\theoremstyle{remark}


\newcommand{\mcA}{\mathcal{A}}
\newcommand{\mcB}{\mathcal{B}}
\newcommand{\mcC}{\mathcal{C}}

\newcommand{\mcH}{\mathcal{H}}

\newcommand{\mcO}{\mathcal{O}}
\newcommand{\mcP}{\mathcal{P}}

\newcommand{\mcV}{\mathcal{V}}











\newcommand{\mbbC}{\mathbb{C}}

\newcommand{\mbbR}{\mathbb{R}}
\newcommand{\mbbS}{\mathbb{S}}

\newcommand{\mbbZ}{\mathbb{Z}}



\author{Yang Qiu}
\address{Department of Mathematics, University of California, Santa Barbara, CA 93106, USA}
\email{yangqiu@math.ucsb.edu}

\author{Zhenghan Wang}
\address{Microsoft Station Q and Department of Mathematics, University of California, Santa Barbara, CA 93106, USA}
\email{zhenghwa@microsoft.com;zhenghwa@math.ucsb.edu}

\begin{document}

\title[Error Correction and disk axiom]{Ground subspaces of topological phases of matter as error correcting codes}

\thanks{Z.W. is partially supported by NSF grant
FRG-1664351 and DOD Muir grant ARO W911NF-19-S-0008.  We thank Shawn Cui for insightful discussions, especially ideas on how to adapt the proof of TQO1 to cover TQO2. The second author thanks J. Haah for helpful discussion and communication.}

\begin{abstract}

Topological quantum computing is believed to be inherently fault-tolerant.  One mathematical justification would be to prove that ground subspaces or ground state manifolds of topological phases of matter behave as error correction codes with macroscopic distance.  While this is widely assumed and used as a definition of topological phases of matter in the physics literature, besides the doubled abelian anyon models in Kitaev’s seminal paper, no non-abelian models are proven to be so mathematically until recently.  Cui et al extended the theorem from doubled abelian anyon models to all Kitaev models based on any finite group \cite{Cui19}.  Those proofs are very explicit using detailed knowledge of the Hamiltonians, so it seems to be hard to further extend the proof to cover other models such as the Levin-Wen.  We pursue a totally different approach based on topological quantum field theories (TQFTs), and prove that a lattice implementation of the disk axiom and annulus axiom in TQFTs as essentially the equivalence of TQO1 and TQO2 conditions in \cite{BHM10,BH11}.  We confirm the error correcting properties of ground subspaces for topological lattice Hamiltonian schemas of the Levin-Wen model and Dijkgraaf-Witten TQFTs by providing a lattice version of the disk axiom and annulus of the underlying TQFTs.  The error correcting property of ground subspaces is also shared by gapped fracton models such as the Haah codes \cite{Haah11,Haah13}.  We propose to characterize topological phases of matter via error correcting properties, and refer to  gapped fracton models as lax-topological.

\end{abstract}


\maketitle

\section{Introduction}

Experimental quantum computing has enjoyed tremendous success in the last three decades that quantum super-advantage has been published \cite{AQS19}.  So far the focus has been on the construction of high quality physical qubits: those directly out of quantum systems without any error correction.  For scalable universal quantum computing, logical qubits seems inevitable: those embedded in physical qubits according to error correction codes \cite{Shor95}.  Topological quantum computing is an alternative approach to fault-tolerant scalable universal quantum computing using topological phases of matter.  To distinguish the two approaches, we will refer to the two-step---first physical qubits, then logical qubits---as traditional or circuit quantum computing.  In both approaches, topological quantum error correction codes will play an essential role.  The theoretical architecture of error correction in surface codes for superconducting qubits is based on boundary defects between the two different gapped boundaries of the toric code, which are essentially equivalent to the Ising anyons or Majorana zero modes \cite{Fowler+12,BK98,CMW16}.  On the other hand, the promise of inherent fault-tolerance in topological quantum computing could be justified by the error correction properties of the ground states.

That the ground subspaces or ground state manifolds of topological phases of matter behave as error correction codes with macroscopic distance is widely assumed and used as a definition of topological phases of matter in the physics literature \cite{BN13}.  But besides the doubled abelian anyon models in Kitaev’s seminal paper, no non-abelian models are proven to be so mathematically until recently.  Cui et al extended the theorem from doubled abelian anyon models to all Kitaev models based on any finite group \cite{Cui19}.  Those proofs are very explicit using detailed knowledge of the Hamiltonians, so it seems to be hard to further extend the proof to cover other models such as the Levin-Wen.  We pursue a totally different approach based on topological quantum field theories (TQFTs), and prove that a lattice implementation of the disk axiom and annulus axiom in TQFTs as essentially the equivalence of TQO1 and TQO2 conditions in \cite{BHM10,BH11}.  We confirm the error correcting properties of ground subspaces for topological lattice Hamiltonian schemas of the Levin-Wen model and Dijkgraaf-Witten TQFTs by providing a lattice version of the disk axiom and annulus of the underlying TQFTs.  The error correcting property of ground subspaces is also shared by gapped fracton models such as the Haah codes \cite{Haah11,Haah13}.  We propose to characterize topological phases of matter via error correcting properties, and refer to gapped fracton models as lax-topological.

Topological (and lax-topological) phases are interesting for many deep reasons.  Topological phases form an exciting sub-field in condensed matter physics lying beyond Landau’s group symmetry and symmetry breaking paradigm.  Topological phases lead to a paradigm shift in physicists’
perspectives on phases: rather than symmetry breaking, more an emergence of new higher category quantum symmetries as a complimentary theme.
Theoretically, topological phases would also allow us to build a topological quantum computer, which would have wide applications.
Finally, topological phases and their topological orders are interesting mathematical objects in their own right.  The approach to topological phases could shed light on the mathematics of general quantum field theories.  The paradigm approach to topological phases is lattice models so that topological partition functions are state-sums.  Analogously to the relationship between Riemann sums and definite integrals, lattice models could be regarded as the Riemann sums for path integrals. 

In this paper, all our TQFTs and categories are unitary.  Topological order and topological phases of matter are for bosonic/spin systems, as opposed to fermionic and/or symmetry protected trivial (SPT) topological phases.

The content of the paper is as follows.  In Sec. \ref{char}, we discuss various characterizations of topological phases and fracton order.  In Sec. \ref{statesum}, we briefly outline how to go from state sum TQFTs to Hamiltonian schemas.  In Sec. \ref{LW}, we verify that the Leven-Wen model is topological in our sense and do the same for DW TQFTs in Sec. \ref{DW}.

\section{Characterization of Topological Phases of Matter}\label{char}

There are no universally adopted mathematically rigorous definitions of topological order and/or topological phase of matter yet, which reflects the active status of the field. Informally, we will adopt the following working conceptual definitions of topological order and topological phases of matter.  A topological order is a catch-all higher category of all universal properties of a topological phase of matter. A topological phase is a phase of matter whose low energy effective theory is a topological quantum field theory (TQFT).  There are definitions in the physics literature based on Hamiltonians or states, but making these into a mathematically rigorous definition is still a significant open problem with subtleties. In dimension $(2 + 1)$, unitary modular tensor categories mathematically or anyon models physically are regarded essentially as topological orders, which is completely rigorous, while a topological phase of matter is a path-connected component of a space of topologically ordered Hamiltonians or an equivalence class of certain states, though how to define the space of Hamiltonians or the equivalence relation of the states is not completely clear.  The definitions of weak higher $n$-categories have not yet converged in mathematics either.

Instead of trying for a rigorous definition, which is probably still pre-mature, we will study characterizations of topological phases that eventually would lead to rigorous mathematical definitions.  Topological phases have many facets including two equivalent characterizations in $(2+1)$ dimensions: the collection of ground subspaces consisting of ground states on all space manifolds on one hand, and the algebraic structure of all the elementary excitations in the plane on the other hand.  The ground subspaces of lattice Hamiltonians should be stable to form a topological modular functor, while the elementary excitations in the plane should form an anyon model.  Topological modular functor and anyon model are two sides of the same coin: one geometric and the other algebraic, while are equivalent and both lead to TQFTs.

The paradigm example of a topological phase of matter is the toric code, which realizes the Drinfeld center of $\mbbZ_2$ anyon model and the Dijkgraaf-Witten TQFT based on $\mbbZ_2$.  An important extension is the Haah code, more general fracton models, which will be referred to as lax-topological phases because they do not strictly fit into the conventional TQFT formalism. The toric code is well-understood---
almost every conceivable question could be answered, while Haah code goes beyond conventional TQFT, and is poorly understood categorically.  
A proper understanding of a continuum limit of Haah code would likely go beyond the framework of renormalizable Lorentian invariant quantum field theory, hence Haah code would become another paradigm example for lax-topological phases.

\subsection{Basic definitions}

We introduce definitions of quantum theory, Hamiltonian schemas, topological schema, etc below for a proper mathematical discussion following \cite{RW18}.

\begin{mydef}\label{quantum:def}

A quantum theory is a triple $(L,B,H)$, where $L$ is a finite-dimensional complex Hilbert space, $B$ an orthnormal basis of $L$, and $H$ a Hermitian operator on $L$, which would be regarded as a matrix using $B$.

\end{mydef} 

The basis $B$, which represents classical configurations, is an unusual ingredient, but is important in the study of quantum information. 
Almost all quantum theories according to Def. \ref{quantum:def} are not related to real physics since they do not satisfy physical constraints such as locality. Our focus will be on examples which come from theoretical physics, though might be still far from experimental physics.

Given a quantum theory $\mcH=(L,B,H)$, the distinct eigenvalues $\{\lambda_i\}$ of the Hamiltonian $H$, ordered as $\lambda_0<\lambda_1<\cdots $ with corresponding eigenspaces $L_{\lambda_i}$, are called energies (or energy levels) of the
theory.  The difference $\lambda_1-\lambda_0$ is called the energy gap.  The eigenspace $\lambda_0$ is called the ground state energy and usually normalized to $0$ by adding a multiple of the identity to the Hamiltonian\footnote{Energy is really an $\mbbR$-torsor,
 and can be shifted up or down by adding some multiple of the identity to the Hamiltonian, relabeling the same ground states with different
energy values as long as the energy is bounded below.}.  The nonzero eigenvectors of $L_{\lambda_0}$ are called ground states. Nonzero eigenvectors for other eigenvalues are called excited states.  We are mainly interested in the low energy physics, i.e., so the properties of the eigenstates for the first few smaller eigenvalues.

\begin{mydef}

An $n$-dimensional quantum schema $\mcH$ is a rule that assigns to every $n$-dimensional manifold $Y$ with some celluation $\Delta$ a quantum theory $\mcH(Y,\Delta)$ with Hamiltonian $H_\Delta$.

\end{mydef}

Abstractly, the manifold $Y$ is imagined as our physical space and the celluation $\Delta$ is a configuration of fundamental constituent particles such as atoms or electrons of a material in $Y$, where the particles live on vertices of $\Delta$ and the higher skeletons such as edges and faces represent the patterns of the interactions among the particles.

\begin{mydef}

An $n$-dimensional Hamiltonian schema is sharp gapped if there is a positive constant $c>0$ such that for all $n$-manifolds $Y$ and their celluations $\Delta$, all Hamitlonians $H_\Delta$ of the resulting quantum systems $\mcH(Y,\Delta)$ is frustration-free and has an energy gap $\geq c$. 

\end{mydef}

Crucially, the constant $c$ does not depend on the celluations. Sharp gapped Hamiltonian schemas are almost topological.

\begin{mydef}

An $n$-dimensional Hamiltonian schema is topological if 

\begin{enumerate}
    \item TQO0: the schema is sharp gapped,
    \item TQO1: the ground states form error correction code,
    \item TQO2: the ground states are homogeneous,
    \item TQO3: the ground state degeneracy is independent of the celluations.
\end{enumerate}

\end{mydef}

The TQO1 and TQO2 conditions are defined in \cite{BHM10,BH11}, which are sufficient conditions for the stability of the spectral gap of a sharp gapped Hamiltonian schema.  We will recall them more precisely in later sections.

It is expected that a topological Hamiltonian schema functorially defines a topological modular functor in the sense there
exists a unitary topological modular functor $V$ such that
for any closed $n$-manifold $Y$, $V(Y)$ is isomorphic to the space of ground states of the Hamiltonian schema on
$Y$ over some limit of appropriate celluations. Therefore, a topological Hamiltonian schema represents mathematically a topological modular functor, hence a topological phase of matter.

\begin{mydef}

A topological phase of matter is morally a path-connected component of the space of topologically ordered Hamiltonians.

\end{mydef}

The problem with this working definition is that there are no formal definitions of  what are the space of topologically ordered Hamiltonians and what kinds of free degrees of freedom are allowed for some entanglement renormalziation.  There are things that we anticipate such as two Hamiltonians are regarded as equivalent if there is a continuous path deforming one into the other through topological Hamiltonian schemas — in particular, all Hamiltonians on the path cannot close the gap. Understanding this definition carefully in general would require some version of the renormalization idea in quantum field theory.  Another important ingredient is stabiliation: the deformaiton of Hamiltonians is allowed to have access to trivial degrees of freedom, and depends on what kinds of local degrees of freedom is permissible.  Moreover, in reality, not all celluations should be allowed as lattices for real materials are usually highly constrained by quantum chemistry.

The toric code is a well-known topological schema, but the verification of the extension in our sense to all finite groups is done only recently in \cite{Cui19}. That the same holds for many other models are widely believed and assumed:

\begin{enumerate}
    \item The low-energy TQFTs of the Levi-Wen model should be the Turaev-Viro type TQFTs based on Drinfeld centers.

\item The low-energy TQFTs of the Walker-Wang model (which is in dimension $(3 + 1)$ should be the Crane-Yetter TQFTs based on pre-modular categories.

\item The low-energy TQFTs of the lattice model from $G$-crossed braided categories should be the $(3+1)$-TQFTs constructed by Cui from the same data.
\end{enumerate}

The main result of the paper is to verify the Levin-Wen /Turaev-Viro case.  The proof reduces other cases to the lattice implementation of certain TQFT axioms.

\subsection{Ground subspaces and topological modular functors}

\subsubsection{Hamiltonian realization of topological modular functors}

An $n$-dimensional topological modular functor (TMF) $\mcV$ is the $n$-dimensional part of a $1$-extended $(n+1)$-TQFT.  The full axioms are explicitly listed in \cite{RW18}.  For our purpose, we recall the disk axiom, annulus axiom and a consequence of the general axioms, which is called the fusion axiom below.

In our formulation, a TQFT always comes with a label set $\Pi$, which consists of the pointed topological charges of the theory\footnote{To deal with subtleties from Frobenius-Schur indicators,  we actually need to use a representative set of an appropriate category.}. The boundary components of $n$-manifolds are labeled
by elements of the label set.

\begin{enumerate}

\item \textrm{Disk axiom:}

$V(D^n;X_i)\cong \begin{cases}
\mbbC, & X_i=\unit,\\
0, & \textrm{Otherwise}, \end{cases}$
\quad \quad where $D^n$ is an $n$-disk topologically and a label $X_i\in \Pi$, where $\unit$ the tensor unit.

\item \textrm{Cylinder axiom:}

\[V(\mcA;X_i,X_j)\simeq \begin{cases} \mbbC & \textrm{if $X_i\simeq X_j^{*}$},\\ 0 & \textrm{otherwise}, \end{cases}\]
where $\mcA$ is the cylinder $\mbbS^{n-1} \times I$ topologically, and $X_i,X_j\in \Pi$.

\item \textrm{Fusion axiom:}

\[V(\mcP;X_i,X_j,X_k)\simeq \begin{cases} \mbbC^{N_{ijk}} & \textrm{if $X_i, X_j, X_k$}\;  \textrm{satisfies the fusion rules},\\ 0 & \textrm{otherwise}, \end{cases}\]
where $N_{ijk}$ is some non-negative integer and $\mcP=P\times \mbbS^{n-2}$
for a pair of pants $P$.  

\end{enumerate}

Locality of a TMF $\mcV$ in the form of a gluing formula implies that a state in $V(Y)$ can be constructed from states on local patches of $Y$.  In $(2+1)$-dimensions, every compact surface $Y$ has a DAP decomposition:  a decomposition of $Y$ into disks, annuli, and pairs of pants.  
Therefore, general anyon states on any space $Y$ can be reconstructed from anyon states on disks, annuli, and pairs of pants together with some general principles. The topology of the disk, annulus, and pair of pants detect the vacuum, anti-particles, and fusion rules.  Appropriate analogues in higher dimensions hold.

Given a Hamiltonian schema $\mcH$ and a space manifold $Y$ with a celluation $\Delta$, all ground states of a Hamitltonian $H_\Delta$ on $Y$ form a Hilbert subspace $V(Y,\Delta)$.  If the ground subspaces have some kind of limit to a Hilbert space $V(Y)$, and the limit is functorial so that the map from $Y$ to $V(Y)$ is a TMF $\mcV$, then the Hamiltonian schema $\mcH$ is called to realize the TMF $\mcV$, which means the low energy physics of $\mcH$ is a TMF.

\subsection{Elementary excitations and anyon models}

While the ground states of a topological Hamiltonian schema should realize a TMF, the low energy elementary excitations, usually the first and second excited states in $L_{\lambda_1}$ and $L_{\lambda_2}$,  should form an anyon model or a unitary modular tensor category.  The elementary excitations correspond to ones that cannot be factored further, though this is not a completely mathematical term due to constraints such as symmetries.

\subsection{The Disk axiom and error correction codes}

\subsubsection{Oneness: lattice version of the disk axiom}

In a quantum error-correcting code, there exists the large finite-dimensional Hilbert space $L$ of physical qubits, and the subspace $V$ of logical qubits where information is encoded to be protected from imperfections and errors. The subspace $V$ behaves
like a one-dimensional subspace with respect to certain local operators in the following sense.

Given a space $(\mbbC^l)^{\otimes n}$ of qudits for some integer $l>0$.
An operator $\mcO$ is $m$-local for some integer $m\geq 1, n>m$ on $(\mbbC^l)^{\otimes n}$ if $\mcO$ is of the form $\textrm{Id}\otimes A  \otimes \textrm{Id}$, where $A$ acts on $m$ qudits.  

An error-correcting code is an embedding of $(\mbbC^l)^{\otimes k}$ into $(\mbbC^l)^{\otimes n}$ such that information in the image of $(\mbbC^l)^{\otimes k}$ is protected from local errors on $(\mbbC^l)^{\otimes n}$, i.e. $m$-local operators for some $m$ on $(\mbbC^l)^{\otimes m}$ cannot change the embeded states of $(\mbbC^l)^{\otimes k}$ in $(\mbbC^l)^{\otimes n}$, called the code subspace, in an irreversible way.  One characterization of the error correction code is the following condition, called TQO1: the composition of operators of the form 

\[ \xymatrix{ (\mbbC^l)^{\otimes k} \ar@{^(->}[r]^i & (\mbbC^l)^{\otimes n} \ar[r]^{O_m} & (\mbbC^l)^{\otimes n} \ar@{->>}[r]^\pi & (\mbbC^l)^{\otimes k} }\]
is $\lambda \cdot \id$ for any $m$-local operator $O_m$ on $(\mathbb{C}^l)^{\otimes n}$ for some scalar $\lambda$, where $i$ is the inclusion and $\pi$ the projection.

When $\lambda \neq 0$, $O_m$ does not degrade the logical qubits as projectively its action on the logical qubits is the identity operator.
But when $\lambda=0$,  $O_m$ rotates logical qubits out of the code subspace, to orthogonal states, which can be detected by simultaneous measurements and subsequently corrected by using designed gates. 

\subsubsection{Topological invariance: annulus axiom}

The annulus axiom in a TMF posits that a topological state cannot change if there is no change of topology: the evolution from a product $Y\times I$ bordism is projectively the identity operator.  A direct consequence is that if $A$ is a topological disk, then an enlargement of $A$ by adding a collar region $A\times I$ will not change the topological states.

Suppose $A$ is a lattice disk region and $B$ is an expansion of $A$ via adding a collar.  For a given Hamiltonian $H$, let $V_B$ be the states that are joint eigenstates of all local terms of $H$ with support in $B$, then 
the TQO2 condition is that the kernel of the first sequence is the same as that of the second below:

\[ \xymatrix{ (\mbbC^l)^{\otimes k} \ar@{^(->}[r]^i & (\mbbC^l)^{\otimes n} \ar[r]^{O_A} & (\mbbC^l)^{\otimes n} &}\]

\[ \xymatrix{ V_B \ar@{^(->}[r]^i & (\mbbC^l)^{\otimes n} \ar[r]^{O_A} & (\mbbC^l)^{\otimes n}  &  ,}\]

where $O_A$ is any local operator supported in $A$.

The version of TQO2 that is proved in our paper later is an operator version of TQO1. More specifically, $O_AP=0$ implies $O_AP_B=0$.

\subsection{Lax-topological phase}

Recently, interests are expanded beyond anyons and anyon models such as studying symmetry defects, gapped boundaries, (2 + 1)-dimensional black holes, and fractons. Fracton order should
follow a definition similar to the topological schema with the condition that the ground state degeneracy is independent
of the cellulation replaced by some other condition.
The Hamiltonians from Haah codes \cite{Haah11} are sums over the cubes, with operators associated to each edge of the cube.
The ground subspaces are error-correcting codes \cite{Haah13}, and the ground state degeneracy depends on the lattice and is unbounded.

\begin{mydef}
A $D$-dimensional Hamiltonian schema is lax-topological if it satisfies $TQO0,TQO1,TQO2$, but $TQO3$ is replaced by the following condition: there exist constants $\alpha$ and $\beta$ depending only on $D$ such that 
$GSD(H_\Delta)\leq \alpha e^{\beta L^{D-2}}$ for any Hamiltonian $H_\Delta$ from the schema on a linear size $L$ celluation $\Delta$ of a space manifold $Y$.
\end{mydef}

Haah codes saturate the bound above, and it is an interesting open question if there are intermediate maximum growth functions for ground state degeneracy in between TQFTs and the Haah codes \cite{Flam+17}.

\begin{myconj}
All gapped fracton schemas are lax-topological.
\end{myconj}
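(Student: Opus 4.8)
The plan is to split the conjecture along the four conditions. TQO0 holds essentially by the meaning of ``gapped fracton schema,'' so the real content is TQO1, TQO2, and the relaxed degeneracy bound, and I would attack all three with the lattice-TQFT machinery of the preceding sections, modified to account for the restricted mobility of fracton excitations. First I would fix a working definition: a gapped fracton schema is a frustration-free, sharp gapped Hamiltonian schema on a cubic-type cellulation whose local terms are (locally) commuting projectors and whose point excitations have restricted mobility --- no local operator transports a single elementary excitation, only certain composites. The commuting-projector, frustration-free hypothesis is what lets TQO0 be an assumption rather than a theorem; the restricted-mobility clause is exactly what will later force the $L^{D-2}$ growth.

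For TQO1 and TQO2 I would follow the Levin--Wen and Dijkgraaf--Witten arguments of Sections \ref{LW} and \ref{DW}: prove a lattice ``oneness'' statement --- an operator supported in a cellularly trivial region of bounded diameter restricts to a scalar on the ground space --- and then the operator form of TQO2, $O_A P = 0 \Rightarrow O_A P_B = 0$. The geometric input replacing the disk axiom is that the logical algebra of a gapped fracton code contains no operators of bounded support: on the ground space a bounded-support operator either violates some stabilizer, hence annihilates $P$, or lies in the logical algebra, and all logical operators of a fracton code are rigid and extensive (strings, membranes, fractal operators). Bounded-support operators therefore act as scalars on the code space, which is TQO1, and the same ``no small logical operators'' principle applied to the enlarged region $B$ gives TQO2.

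The crux is the relaxed TQO3. Writing $\log_l \operatorname{GSD}(H_\Delta) = n - r$ with $r$ the rank of the stabilizer group, one must show $n - r = O(L^{D-2})$. I would prove a schema-level ``cleaning lemma'': a logical operator can always be supported off any subregion that contains no full non-contractible cycle of the type carrying logical information, so the independent logical degrees of freedom are indexed by a set of inequivalent such cycles, whose cardinality I would bound by $O(L^{D-2})$ through a direct combinatorial count on the cellulation. For $D = 3$ this reproduces the known $\operatorname{GSD} \sim e^{O(L)}$ behavior of the X-cube and Haah models, with Haah codes saturating the bound.

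The main obstacle is precisely this third step. There is at present no structure theorem classifying gapped fracton schemas, and the $L^{D-2}$ bound has been verified only model by model; a fully general argument requires a schema-independent quantitative rigidity theorem for fracton logical operators --- a far stronger analogue of the disk axiom --- which I do not expect to be routine. The realistic near-term target is the conjecture restricted to translation-invariant commuting-projector stabilizer (in particular CSS) schemas, where Haah's polynomial/module methods apply and $n - r$ can be read off as the dimension of an explicit quotient module; removing the stabilizer restriction, and allowing general cellulations, is the genuinely open part.
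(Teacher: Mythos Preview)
The paper does not prove this statement: it is stated explicitly as a \textbf{Conjecture} and left open, with only the remark that Haah codes saturate the proposed $L^{D-2}$ bound. There is therefore no paper proof to compare against.

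Your proposal is, as you yourself acknowledge in the final paragraph, a research outline rather than a proof. Two points are worth flagging. First, you say you would ``follow the Levin--Wen and Dijkgraaf--Witten arguments'' for TQO1 and TQO2, but the mechanism of those proofs is specifically TQFT-theoretic: one cuts out an annular neighbourhood, applies the isomorphisms $re$ and $G$ coming from the gluing formula, and uses $\dim V(S^2)=1$ (the disk axiom) to collapse a local operator to a scalar. Fracton models have no underlying TQFT and no gluing formula of this kind --- that is precisely why they are only \emph{lax}-topological --- so the paper's machinery does not transfer. What you actually invoke in its place (``no small logical operators'') is a direct stabilizer-code argument in the style of Haah's own proofs, not an adaptation of the paper's TQFT method; it is reasonable for Pauli stabilizer models but does not obviously cover the broader ``commuting projector'' class in your working definition. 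Second, your assessment of the relaxed TQO3 step is accurate and honest: the bound $\operatorname{GSD}(H_\Delta)\le \alpha e^{\beta L^{D-2}}$ is known only model by model, and a schema-independent argument would require either a structure theorem for fracton codes or a quantitative cleaning lemma of the strength you describe, neither of which is currently available. Restricting to translation-invariant CSS stabilizer schemas, where Haah's polynomial-module methods apply, is indeed the realistic near-term target; the full conjecture remains open, exactly as the paper presents it.
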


\section{From state-sums to Hamiltonian schemas}\label{statesum}

\subsection{State-sum TQFTs}

State-sum construction of topological quantum invariants is an approach to quantum invariants using ideas from statistical mechanics.  The simplest illustration is the Euler characteristic of a simplicial complex.  Fixing a positive number $a$, and suppose $X$ is an $m$-dimensional simplicial complex with the $i$-skeleton $\Delta^i$---all the closed $i$-simplices.  If each $i$-simplex $s$ is assigned with weight $a^{(-1)^i}$, then the state sum $Z(X)$ of $X$ is defined as 

$$Z(X)=\frac{\prod_{i=0,\textrm{even}}^m (\prod_{s\in \Delta^i}a)}{\prod_{i=1,\textrm{odd}}^m (\prod_{s\in \Delta^i}a)}=a^{\chi(X)},$$

where $\chi(X)=\sum_{i=0}^m (-1)^i |\Delta^i|$ is the Euler characteristic of $X$.

While there are no formal definitions of what is a state sum, it is widely believed that a weak $n$-category $\mcB$ with some additional structures would lead to an $(n+1)$-TQFT, in particular partition functions of $(n+1)$-dimensional space-time manifolds $X$ and Hilbert spaces for  $n$-dimensional space manifolds $Y$.  For the construction of the state-sum for an $(n+1)$-manifold $X$, one usually starts with a triangulation or celluation $\Delta$ of $X$, then all skeletons in $\Delta$ are colored by data from $\mcB$.  Finally a sum $Z(X)$ is defined and proved to be independent of all choices, especially the triangulation $\Delta$.

\subsection{From state-sum to Hamiltonian schemas}

Most state-sum constructions of $(n+1)$-TQFTs in the literature focus on the partition functions or topological invariants of the $(n+1)$-manifolds.  As is widely expected, all state-sum TQFTs are fully extended so in particular  $1$-extended.  The extension to $n$-manifolds is straightforward, but the extension to dimension $n-1$ requires the description of certain categories, which are in general hard in an explicit way.  One reason for our proof of the Levin-Wen and DW is that a good understanding of the related categories is available.

Given some category $\mcB$ and a celluation $\Delta$ of an $n$-manifold $Y$.  For simplicity, the label set $\Pi_\mcC$ is assumed to consist of a complete set of representatives of simple objects of $\mcB$.  Furthermore, only the $1$ and $2$ skeletons $\Delta^{(i)},i=1,2$, are colored as in all our examples.

To define the Hilbert space $V(Y)$, consider a celluation $\Delta_{Y\times I}$ of $Y\times I$ that extends the celluation $\Delta$ of $Y$.  For any two colorings of the $1$-skeleton of $\Delta^{(1)}$, define the relative state-sum over all extensions of the two boundary colorings to be colorings of the $1$-skeleton of $\Delta_{Y\times I}$ with the convention that the state-sum is $0$ if there are no extensions.  Regarding the relative state-sum as a matrix entry, we obtain a linear map $\tilde{Z}(Y\times I)$ from the vector spaces spanned by all colorings of $\Delta^{(1)}$ to itself.  It can be shown that the map $\tilde{Z}(Y\times I)$ is an idempotent, and finally $V(Y)$ is defined to be the image of $\tilde{Z}(Y\times I)$.

Topological invariance of the state-sum leads to a gluing formula and hence a description of $\tilde{Z}(Y\times I)$ using local patches, thus a Hamiltonian schema.

\subsubsection{Explicit Hamiltonians}

To write down explicit Hamiltonians, it is convenient to define admissible colorings as those such that the Hom space around each vertex is non-zero.  Then the first family of local terms in the Hamiltonian schema simply enforces the admissibility condition around each vertex for colorings of the $1$-skeleton, physically some version of the Gauss law.  Then for each $2$-cell in the celluation, a term to enforce the $0$-flux condition through the attaching circle of each $2$-handle is added as the second family of local terms. The difficulty lies in the lack of complete symmetry in $6j$ symbols so care must be taken in the ordering of vertices of a $3$-cell.  Various solutions exist for specific models such as the Levn-Wen and Walker-Wang models.

\subsubsection{Gluing formulas and ground subspaces}

Suppose $N$ is a small neighborhood of the $1$-skelton $\Delta^{(1)}$ and $N\cup h$ the union with a $2$-handle $h$ along an attaching circle $s$.  Then the gluing axiom implies that the Hilbert space $V(N\cup h)\cong \oplus_{l} V(N;l)\otimes V(h,l)$, where $l$ are labels for the attaching region of $s\times I$.  Only the trivial label $l_0$ has non zero contribution and $V(h,l_0)\cong \mbbC$ by the disk axiom, hence $V(N\cup h)\cong \textrm{Im}(e_0)(V(N))$, where $e_0$ is a projection onto the $0$-flux subspace from the admissible subspace. It follows that $V(Y)\cong \textrm{Im}(\prod_{h}e_h(V(N)))$, where the product is over all $2$-handles.

\begin{myconj}

All state-sum Hamitlonian schemas are topological.
   
\end{myconj}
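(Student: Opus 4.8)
The plan is to separate the four conditions TQO0--TQO3 and handle each by transporting the corresponding structural feature of the underlying state-sum TQFT down to the finite lattice, exactly as in the Levin--Wen/Turaev--Viro case proved in Sec.~\ref{LW}. The two substantive code-theoretic conditions, TQO1 and TQO2, reduce to \emph{lattice versions} of the disk and annulus axioms. For a $2$-handle $h$ attached along its attaching circle $s$ to a neighborhood $N$ of the $1$-skeleton, the gluing formula coming from locality of the state sum gives $V(N\cup h)\cong\bigoplus_{l}V(N;l)\otimes V(h,l)$, and the disk axiom forces $V(h,l)=0$ unless $l=l_0$, with $V(h,l_0)\cong\mbbC$; iterating over all $2$-handles yields $V(Y)\cong\textrm{Im}\big(\prod_{h}e_h\big)(V(N))$ with each $e_h$ a projector onto the $0$-flux subspace at $s$. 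A local operator $O_A$ supported in a disk region $A$ commutes with every $e_h$ whose handle lies outside $A$, and on the image of the remaining projectors the disk axiom identifies the relevant factor with $\mbbC$; hence $\pi\circ O_A\circ i=\lambda\cdot\id$ on the ground subspace, which is TQO1. For TQO2 one uses the operator form stated above: if $B$ is the collar enlargement of the disk $A$, the annulus (product-bordism) axiom shows that the extra collar contributes only the trivial label, so that $P$ factors through $P_B$, and $O_AP=0$ forces $O_AP_B=0$.

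Condition TQO3 should follow purely from topological invariance of the state sum. Since $V(Y,\Delta)$ is by construction the image of the idempotent $\tilde Z(Y\times I)$, its dimension is computed by a relative state sum on $Y\times I$; gluing the two ends identifies $\dim V(Y,\Delta)$ with the state-sum invariant of a closed mapping torus --- $Z(Y\times\mbbS^1)$ when $Y$ is closed, with the evident modification in the bordered case --- and the latter is independent of $\Delta$ precisely because the state sum is a topological invariant. The same computation shows the ground-state degeneracy equals the TQFT dimension, hence grows in the topologically correct way rather than with the size of the lattice.

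The main obstacle is TQO0: exhibiting the schema as a frustration-free family with a gap bounded below by a constant \emph{independent of the celluation}. Two sub-issues arise. First, one must be able to choose the flux terms $e_h$ so that the schema is frustration-free with ground subspace exactly $\textrm{Im}\big(\prod_h e_h\big)(V(N))$; the obstruction here is precisely the lack of full symmetry of the $6j$-symbols, which is why a consistent ordering of the vertices of each cell (a branching/framing of $\Delta$), and possibly auxiliary terms, are needed, and why the construction is at present carried out model by model. Granting that the $e_h$ are genuine projectors fixing $V(Y)$, frustration-freeness is automatic from the gluing formula, and if moreover the Hamiltonian can be arranged to be a sum of commuting projectors the gap is $\geq 1$ trivially. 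When commutativity genuinely fails one would instead prove the uniform gap through a finite-size Knabe-type criterion or a Nachtergaele martingale argument, using that the nested DAP decomposition supplies families of local ground-space projectors whose pairwise overlaps are bounded uniformly in $\Delta$ --- a bound that is itself a finitary shadow of the disk and annulus axioms.

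So the conjecture comes down to two finitary statements about an arbitrary state-sum TQFT: (i) its $1$-extended version is realized on every celluation by a frustration-free, uniformly gapped commuting-projector (or martingale-controlled) Hamiltonian, and (ii) the gluing formula for $V(N\cup h)$ holds exactly at finite lattice size, not merely in a limit. Both hold whenever the relevant $(n-1)$-categorical data are explicitly available, as in the Levin--Wen/Turaev--Viro and Dijkgraaf--Witten families of this paper, and I expect the general case to go through exactly when those categories can be made sufficiently explicit; the uniform-gap half of (i) is the part I do not expect to follow from purely formal considerations.
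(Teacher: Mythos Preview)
The statement you are attempting to prove is labeled a \emph{conjecture} in the paper; the paper offers no proof of it and does not claim one. What the paper actually proves are two specific instances --- the Levin--Wen schema (Theorem~1) and the DW schema (Theorem~2) --- and the general assertion about \emph{all} state-sum schemas is left open. So there is no ``paper's own proof'' to compare your proposal against.

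Your submission is explicitly a strategy, not a proof, and you say as much. As a strategy it is reasonable and tracks the paper's philosophy closely: your reduction of TQO1/TQO2 to lattice disk and annulus axioms is exactly the mechanism the paper uses in Sections~\ref{LW} and~\ref{DW}, and your argument for TQO3 via $Z(Y\times\mbbS^1)$ is the standard one. You also correctly isolate the genuine obstruction: for an arbitrary state-sum TQFT one does not currently know how to write down the local projectors in a way that is simultaneously frustration-free and uniformly gapped, because the $(n{-}1)$-categorical data (the analogue of the label set and the $6j$-symbols with all their coherence/symmetry) are not available in general. The paper says this in the paragraph preceding the conjecture: ``The extension to dimension $n-1$ requires the description of certain categories, which are in general hard in an explicit way.'' Your items (i) and (ii) at the end are precisely the content of the conjecture, so the proposal is circular at that point --- it reduces the conjecture to assertions of equal strength rather than proving it. In short: sound outline, no proof, and none was expected since the paper itself leaves this open.
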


\subsection{Non-commuting Hamiltonian schemas}

Topological Hamiltonian schemas go beyond commuting ones, but mathematical techniques to prove the existence are rare.  Numerical simulations provide strong evidence that the Heisenberg anti-ferromagnetic spin Hamiltonian realizes the same TQFT as the toric code \cite{JWB}, and Haldance hardcore boson model realizes the chiral Semion model \cite{CV}.  The Heisenberg anti-ferromagnetic spin Hamiltonian is potentially realized by a real material Herbertsmithite.

\section{The Levin-Wen model}\label{LW}

It is widely believed that the Levin-Wen model realizes Turaev-Viro type TQFTs that are quantum doubles \cite{LW05}, i.e. those constructed from a unitary fusion category $\mathcal{C}$ using  triangulations of manifolds.  One version of such a statement would be that the Levin-Wen schema realizes Turaev-Viro TQFTs, and another would be that the elementary excitations or quasi-particles of the Levin-Wen model form the Drinfeld center $Z(\mathcal{C})$/quantum double $D(\mathcal{C})$ of the input $\mathcal{C}$.

A mathematical definition of the Levin-Wen model in full generality is still complicated due to the Frobenius-schur indicators.  To the best of our knowledge, there are no proofs that the Levin-Wen model in \cite{LW05} is frustration-free and realize Turaev-Viro type TQFTs in either sense above.  By general TQFT consideration, there is a version of Levin-Wen model that would be frustration-free.  We reformulate the Levin-Wen model for the application of results in \cite{Kirillov11} below. 

\subsection{The Levin-Wen model}

 Let $Y$ be a closed surface with a cell structure $\Delta$ as defined in \cite{KB10}, \cite{Kirillov11} with an orientation for each edge and a starting vertex for each 2-face $f$. The set of isomorphism classes of irreducible objects of $\mathcal{C}$ is denoted by $\pi_\mcC$, the number of edges of $f$ by $n(f)$. We use $e, f$ to denote the edges, faces of $\Delta$ respectively.
 
Define Hilbert space $L$ as:
$$L=\bigotimes_{e\in\Delta}\mathbb{C}^{\pi_{\mathcal{C}}}\otimes\bigotimes_{f\in\Delta}\text{Hom}(1,(\bigoplus_{X\in\pi_{\mathcal{C}}}X)^{\otimes n(f)}).
$$

The Hilbert space $L$ can be rewritten as
$$L=\bigotimes_{e\in\Delta}\mathbb{C}^{\pi_{\mathcal{C}}}\otimes\bigotimes_{f\in\Delta}\bigoplus_{X_1,...,X_{n(f)}\in \pi_{\mathcal{C}}}\text{Hom}(1,\bigotimes_{i=1}^{n(f)}X_i)
$$
with a basis of the form:
$$\bigotimes_{e\in\Delta}X_e\otimes\bigotimes_{f\in\Delta}(x_1^f,...,x_{|\pi_{\mathcal{C}}|^{n(f)}}^f),
$$
where $x_j^{f}$ is the basis of $\text{Hom}(1,\bigotimes_{i=1}^{n(f)}X_i)$ for some $X_1,...,X_{n(f)}$. When $X_i$ violates the fusion rules, $x_j^f$ is set to be $0$.

For any face $f_0$ of $\Delta$, we define $H_{f_0}$ on $L$ by
$$H_{f_0}(\bigotimes_{e\in\Delta}X_e\otimes\bigotimes_{f\in\Delta}(x_1^f,...,x_{|\pi_{\mathcal{C}}|^{n(f)}}^f))=\bigotimes_{e\in\Delta}X_e\otimes\bigotimes_{f\neq f_0}(x_1^f,...,x_{|\pi_{\mathcal{C}}|^{n(f)}}^f)\otimes(0,...,x_j^{f_0},...,0)
$$
where $x_j^{f_0}$ corresponds to the component for $X_1,...,X_{n(f_0)}$ matching the colorings on the edges of $f_0$.
For any vertex $v$ of $\Delta$, we define $H_v=(H_{v,loc}\otimes id)\circ\prod_{f_v}H_{f_v}$ where $f_v$ is the face containing $v$, $H_{v,loc}$ acts on $\bigoplus_{l_v}\bigotimes_{f_v}H(f_v,l_v)$, $l_v$ is the labeling on the edges of all $f_v$, $H(f_v,l_v)$ is defined in \cite{KB10} and $id\in End(\bigotimes_{e\notin f_v}\mathbb{C}^{\pi_{\mathcal{C}}}\otimes\bigotimes_{f\neq f_v}\text{Hom}(...))$. Define $H_{v,loc}$ as shown in Fig \ref{fig1}. The dashed lines represent the dual complex of $\Delta$. $y_i$ represents the sum over all dual basis which is used in \cite{Kirillov11}. The blue region represents the contraction along the red lines. Then $H_f,H_v$ are projective, $H_fH_{f^{\prime}}=H_{f^{\prime}}H_f$, $H_vH_{v^{\prime}}=H_{v^{\prime}}H_v$. Let $\tilde{V}(\Delta)$ be the large vector space in Turaev-Viro TQFT and $V(\Delta)$ be the image of $\tilde{V}(\Delta)$ under partition function $\mathcal{H}_p=Z(\Sigma\times I)$. We have $\tilde{V}(\Delta)=\prod_{f\in\Delta}H_f(L)$ and $V(\Delta)=\prod_{v\in\Delta}H_v\prod_{f\in\Delta}H_f(L)$.
\begin{figure}
    \centering
\begin{tikzpicture}[scale=0.7]

\draw (-1,0)--(7,0);
\draw (-1,-3)--(7,-3);
\draw (-1,-6)--(7,-6);
\draw (0,1)--(0,-7);
\draw (3,1)--(3,-7);
\draw (6,1)--(6,-7);

\node[circle,draw,scale=0.7] (1) at (1.5,-1.5) {$x_1$};
\node[circle,draw,scale=0.7] (2) at (1.5,-4.5) {$x_2$};
\node[circle,draw,scale=0.7] (3) at (4.5,-4.5) {$x_3$};
\node[circle,draw,scale=0.7] (4) at (4.5,-1.5) {$x_4$};

\node at (3,-1.5) {$X_1$};
\node at (1.5,-3) {$X_2$};
\node at (3,-4.5) {$X_3$};
\node at (4.5,-3) {$X_4$};

\draw[dashed] (1)--(2);
\draw[dashed] (1)--(4);
\draw[dashed] (2)--(3);
\draw[dashed] (3)--(4);
\draw[dashed] (1)--(1.5,1);
\draw[dashed] (1)--(-1,-1.5);
\draw[dashed] (4)--(4.5,1);
\draw[dashed] (4)--(7,-1.5);
\draw[dashed] (2)--(-1,-4.5);
\draw[dashed] (2)--(1.5,-7);
\draw[dashed] (3)--(7,-4.5);
\draw[dashed] (3)--(4.5,-7);

\node at (3,-3) {$v$};
\node at (-2.5,-3) {$H_{v,loc}($};
\node at (8,-3) {$)$};
\end{tikzpicture}
\begin{tikzpicture}[scale=0.9]

\draw (-1,0)--(7,0);
\draw (-1,-3)--(7,-3);
\draw (-1,-6)--(7,-6);
\draw (0,1)--(0,-7);
\draw (3,1)--(3,-7);
\draw (6,1)--(6,-7);

\node[circle,draw,scale=0.7] (1) at (1.5,-1.5) {$x_1$};
\node[circle,draw,scale=0.7] (2) at (1.5,-4.5) {$x_2$};
\node[circle,draw,scale=0.7] (3) at (4.5,-4.5) {$x_3$};
\node[circle,draw,scale=0.7] (4) at (4.5,-1.5) {$x_4$};
\node[circle,draw,scale=0.5] (11) at (2.5,-1.5) {$y_1$};
\node[circle,draw,scale=0.5] (12) at (3.5,-1.5) {$y_1$};
\node[circle,draw,scale=0.5] (21) at (1.5,-2.5) {$y_2$};
\node[circle,draw,scale=0.5] (22) at (1.5,-3.5) {$y_2$};
\node[circle,draw,scale=0.5] (31) at (2.5,-4.5) {$y_3$};
\node[circle,draw,scale=0.5] (32) at (3.5,-4.5) {$y_3$};
\node[circle,draw,scale=0.5] (41) at (4.5,-2.5) {$y_4$};
\node[circle,draw,scale=0.5] (42) at (4.5,-3.5) {$y_4$};

\node[above] at (2,-1.5) {$X_1$};
\node[above] at (4,-1.5) {$X_1$};
\node[above] at (3,-1.5) {$X_1^{\prime}$};
\node[left] at (1.5,-2) {$X_2$};
\node[left] at (1.5,-4) {$X_2$};
\node[left] at (1.5,-3) {$X_2^{\prime}$};
\node[below] at (2,-4.5) {$X_3$};
\node[below] at (4,-4.5) {$X_3$};
\node[below] at (3,-4.5) {$X_3^{\prime}$};
\node[right] at (4.5,-2) {$X_4$};
\node[right] at (4.5,-4) {$X_4$};
\node[right] at (4.5,-3) {$X_4^{\prime}$};
\node at (2.2,-2.2) {$Y$};
\node at (3.8,-2.2) {$Y$};
\node at (2.2,-3.8) {$Y$};
\node at (3.8,-3.8) {$Y$};

\draw[red] (1)--(11);
\draw[red] (1)--(21);
\draw[red] (2)--(22);
\draw[red] (2)--(31);
\draw[red] (3)--(32);
\draw[red] (3)--(42);
\draw[red] (4)--(12);
\draw[red] (4)--(41);
\draw[dashed] (11)--(12);
\draw[dashed] (21)--(22);
\draw[dashed] (31)--(32);
\draw[dashed] (41)--(42);
\draw[red] (11)--(21);
\draw[red] (22)--(31);
\draw[red] (32)--(42);
\draw[red] (12)--(41);

\draw[dashed] (1)--(1.5,1);
\draw[dashed] (1)--(-1,-1.5);
\draw[dashed] (4)--(4.5,1);
\draw[dashed] (4)--(7,-1.5);
\draw[dashed] (2)--(-1,-4.5);
\draw[dashed] (2)--(1.5,-7);
\draw[dashed] (3)--(7,-4.5);
\draw[dashed] (3)--(4.5,-7);

\fill[blue, fill opacity=0.2](1) circle (1.3);
\fill[blue, fill opacity=0.2](2) circle (1.3);
\fill[blue, fill opacity=0.2](3) circle (1.3);
\fill[blue, fill opacity=0.2](4) circle (1.3);

\node at (3,-3) {$v$};
\node at (-3.5,-3) {$=\sum_{Y,X_i^{\prime},y_i}\frac{d_Y\prod_i\sqrt{d_{X_i}d_{X_i^{\prime}}}}{D^2}$};

\end{tikzpicture}
    \caption{$H_{v,loc}$}
    \label{fig1}
\end{figure}
\begin{mythm}
The Levin-Wen Hamiltonian schema is topological.
\end{mythm}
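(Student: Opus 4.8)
The plan is to verify the four conditions TQO0--TQO3 in turn, using throughout the Kirillov--Balsam reformulation \cite{Kirillov11,KB10} recalled above, which identifies the ground subspace $V(\Delta)=\prod_{v}H_v\prod_{f}H_f(L)$ with the image of the Turaev--Viro partition function $Z(\Sigma\times I)$. For TQO0, the point is that the local terms $H_f$ and $H_v$ are orthogonal projections satisfying $H_fH_{f'}=H_{f'}H_f$ and $H_vH_{v'}=H_{v'}H_v$ (and, after the reformulation, $H_vH_f=H_fH_v$ as well); hence $H_\Delta=\sum_f(\id-H_f)+\sum_v(\id-H_v)$ is a sum of commuting projections, so it is frustration-free and has spectral gap $\geq 1$. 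Since the bound $1$ does not depend on $\Delta$ or on $\Sigma$, the schema is sharp gapped. For TQO3, the topological invariance of the Turaev--Viro state sum gives a canonical isomorphism $V(\Delta)\cong V(\Sigma)$ with the Turaev--Viro TQFT vector space of the surface, which depends only on the homeomorphism type of $\Sigma$; in particular $\dim V(\Delta)$ is independent of the celluation.

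The heart of the proof is to obtain TQO1 and TQO2 as lattice incarnations of the disk and annulus axioms. Given an $m$-local operator $O_m$ on $L$, its support is contained in a contractible lattice region $A$, i.e.\ a lattice disk, which we may enlarge by a collar to a slightly bigger disk $B$. The gluing formula coming from locality of the state sum produces a decomposition $V(\Sigma)\cong\bigoplus_{l}V(\Sigma\setminus A;l)\otimes V(A;l)$ over labels $l$ of the separating annulus $\partial A\times I$; by the disk axiom only $l=\unit$ survives and $V(A;\unit)\cong\mbbC$. Because $O_m$ acts only through the $V(A;l)$ tensor factor, the composite $\pi\,O_m\,i$ equals $\lambda\cdot\id$ on the code subspace $V(\Sigma)$, which is TQO1; the code distance is macroscopic, being bounded below by the size of the smallest non-contractible region, hence growing with the linear size of $\Delta$. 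For TQO2 we establish the operator version stated above: if $O_AP=0$, where $P$ projects onto $V(\Sigma)$ and $O_A$ is supported in $A$, then $O_AP_B=0$, where $P_B$ projects onto the joint eigenspace of all local terms $H_f,H_v$ with support in $B$. This is the lattice form of ``enlarging a disk by a collar does not change the topological states'': the annulus axiom identifies the local-ground states on $B$ with genuine ground states after collapsing the collar $B\setminus A\simeq\partial A\times I$, so that on the patch $B$ the projections $P$ and $P_B$ agree up to the isomorphism implemented by the relevant idempotent $\tilde Z$, whence $O_A$ kills the image of one precisely when it kills the image of the other.

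The main obstacle is making the lattice gluing formula sharp enough for the disk-axiom step: one has to show that the idempotent $\tilde Z(\Sigma\times I)$ whose image is $V(\Delta)$ genuinely factors through the tensor decomposition along $\partial A$ in a way compatible with the individual terms $H_f$ and $H_v$, and that the vacuum is the only label that can cross $\partial A$ in a ground state. This is where the explicit description of the categories entering the Turaev--Viro extension to surfaces is essential --- available here because of \cite{Kirillov11} --- and where the care demanded by the Frobenius--Schur indicators and by the vertex orderings in the $6j$-symbols enters. The subtlest point is the reduction of TQO2 to the operator form $O_AP=0\Rightarrow O_AP_B=0$: one must control how passing from $A$ to its collar-enlargement $B$ interacts with states that are eigenstates of all local terms in $B$ yet not globally ground, and verify that the annulus axiom suffices to rule out any genuinely new such states beyond those pulled back from the smaller disk.
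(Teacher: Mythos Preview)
Your high-level plan coincides with the paper's: TQO0 and TQO3 are immediate from commutativity of the projectors and topological invariance of the Turaev--Viro state sum, and the heart of the matter is to obtain TQO1 and TQO2 from the disk and annulus axioms via the Kirillov--Balsam lattice formalism. You also correctly identify the central obstacle, namely that the TQFT gluing decomposition must be made compatible with the action of a lattice-local operator on $L$.

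Where your proposal and the paper diverge is in the mechanism for resolving that obstacle. You invoke the gluing formula $V(\Sigma)\cong\bigoplus_l V(\Sigma\setminus A;l)\otimes V(A;l)$ directly and then apply the disk axiom to the $V(A;l)$ factor. The paper does not use this decomposition as stated; instead it works at the intermediate level of the admissible subspaces $L_v=\prod_f H_f(L)$ and performs two explicit lattice moves: an edge-removal isomorphism $re$ that reduces the annular collar to a single $2$-cell, and then the Kirillov--Balsam cut-and-glue isomorphism $G$ that severs the annulus and caps the two boundary circles with disks $D_\alpha,D_\beta$. After these moves the surface becomes the disjoint union of a sphere $\Delta_s\cup D_\alpha$ and a complement, and TQO1 follows because $\dim V(S^2)=1$ forces $h_{S^2}$ to be a scalar. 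The advantage of the paper's route is that $re$ and $G$ are isomorphisms of the \emph{lattice} admissible spaces which visibly intertwine both $\mathcal{H}_p$ and the local operator $h\otimes\id$; your direct appeal to the gluing formula leaves open precisely the step of showing that $O_m$ acts only through the $V(A;l)$ tensor factor, since the inclusion $V(\Sigma)\hookrightarrow L$ does not a priori respect that TQFT-level splitting.

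Your TQO2 argument is the weakest part. The sentence ``the annulus axiom identifies the local-ground states on $B$ with genuine ground states after collapsing the collar'' is not right as stated: $V_B$ is much larger than $V(\Sigma)$ and is not identified with it by any collar collapse. What the paper actually proves is that the same surgery moves $re$ and $G$ commute with the \emph{local} vertex operators $H_v$ (this is verified by explicit graphical calculus), so that after surgery both $V$ and $V_B$ factor as $V_{\Delta_s\cup D_\alpha}\otimes(\text{outside})$, with the \emph{same} one-dimensional sphere factor in each. Then $O_AP=0$ forces $h(c_s)=0$ on the sphere generator $c_s$, and this immediately gives $O_AP_B=0$. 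Your plan would be complete if you replaced the collar-collapse heuristic with this factorization argument.
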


Below we prove the Levin-Wen model satisfies the TQO1 and TQO2 conditions. 

\subsection{TQO1}

To prove the TQO1, let $\Delta_s$ be a subcomplex formed by 2-cells and $\Delta_{\bar{c}}$ be the closure of the complement of $\Delta_s$, and define the following vector spaces.
\begin{align*}
&L_*=\bigotimes_{e\in\Delta_*}\mathbb{C}^{\pi_{\mathcal{C}}}\otimes\bigotimes_{f\in\Delta_*}\text{Hom}(1,(\bigoplus_{X\in\pi_{\mathcal{C}}}X)^{\otimes n(f)})
\\
&L_c=\bigotimes_{e\notin\Delta_s}\mathbb{C}^{\pi_{\mathcal{C}}}\otimes\bigotimes_{f\notin\Delta_s}\text{Hom}(1,(\bigoplus_{X\in\pi_{\mathcal{C}}}X)^{\otimes n(f)})
\\
&L_v=\prod_{f\in\Delta}H_f(L)
\\
&L_{*,v}=\prod_{f\in\Delta_*}H_{f}(L_*)
\end{align*}
where $*=s,\bar{c}$. More explicitly, we have
\begin{align*}
&L_v=\bigoplus_{l}\bigotimes_{f\in\Delta}\text{Hom}(1,X_1^{l,f}\otimes\cdots\otimes X_{n(f)}^{l,f})
\\
&L_{*,v}=\bigoplus_{l_*}\bigotimes_{f\in\Delta_*}\text{Hom}(1,X_1^{l_*,f}\otimes\cdots\otimes X_{n(f)}^{l_*,f})
\end{align*}
where $l,l_*$ are the labeling for edges of $\Delta,\Delta_*$ respectively, and $X_i^{l,f}$ are the labels of edges on $f$.

For any $h\in End(L_s)$, we define $h_g\in End(V)$ as follows.
$$V\stackrel{i_p}{\longrightarrow}L_v\stackrel{i_v}{\longrightarrow}L_s\otimes L_c\stackrel{h\otimes\text{id}}{\longrightarrow}L_s\otimes L_c\stackrel{\prod_{f\in\Delta}H_f}{\longrightarrow}L_v\stackrel{\mathcal{H}_p}{\longrightarrow}V
$$
where $i_p,i_v$ are inclusions.

\begin{mypro}
When $\Delta_s$ is in the interior of subcomplex $\Delta_b$ and $\Delta_b$ is homeomorphic to a disk, $h_g=c\cdot id$ for some $c\in\mathbb{C}$.
\end{mypro}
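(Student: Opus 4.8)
The plan is to reduce the statement to the \emph{disk axiom} of the underlying Turaev--Viro modular functor. The key external input is the identification, from \cite{KB10,Kirillov11}, of the string-net space $V(\Delta)=\prod_{v}H_v\prod_{f}H_f(L)$ with the (cell-structure independent) Turaev--Viro state space $V(\Sigma)$, together with the fact that $\mathcal{H}_p=Z(\Sigma\times I)$ is the associated ground-state projector. Granting this, every map in the definition of $h_g$ is supported in $\Delta_b$ together with a collar, so $h_g$ is exactly the endomorphism of $V(\Sigma)$ produced by inserting $h$ into the cylinder cobordism $\Sigma\times I$ within the subregion $\Delta_b\times I$; the whole point is that $\Delta_b$ being a disk forces the part of the state space that $h$ can affect to be one-dimensional.

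First I would use that $\Delta_s$ lies in the \emph{interior} of $\Delta_b$ to choose a small collar of $\partial\Delta_b$ in $\Sigma$ disjoint from $\Delta_s$, and split the Hilbert space as a tensor product $L\cong L_{\Delta_b}\otimes L_{\Delta_{\bar b}}$ (edges and faces of $\Delta_b$ versus those of the closure $\Delta_{\bar b}$ of the complement, with the boundary circle placed on the $\Delta_b$ side). All of the local terms $H_f$, $H_v$ respect this splitting except for the finitely many attached to faces meeting $\partial\Delta_b$, and these are disjoint from $\Delta_s$; since $\Delta_s\subset\Delta_b$, the operator $h\otimes\id$ acts only on the first factor $L_{\Delta_b}$.

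Next I would apply the gluing formula for the Turaev--Viro TMF along the circle $\partial\Delta_b$, exactly as recalled in the subsection on gluing formulas and ground subspaces:
\[
V(\Sigma)\;\cong\;\bigoplus_{l\in\Pi} V(\Delta_b;l)\otimes V(\Delta_{\bar b};l).
\]
By the disk axiom $V(\Delta_b;l)=0$ unless $l=\unit$, in which case $V(\Delta_b;\unit)\cong\mathbb{C}$, so $V(\Sigma)\cong\mathbb{C}\otimes V(\Delta_{\bar b};\unit)$; under this identification the ground-state projector $\mathcal{H}_p\circ\prod_{f}H_f$ factors as $e_0\otimes e_0'$ with $\rk e_0=1$ (this is the lattice disk axiom: only the trivial label on the attaching circle contributes, cf.\ the handle-attachment discussion in the text). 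Conjugating $h\otimes\id$ through this decomposition, and using that $h$ lives entirely on the $\Delta_b$ side, shows that $h_g$ has the form $\tilde h\otimes\id_{V(\Delta_{\bar b};\unit)}$ for some $\tilde h\in\End(V(\Delta_b;\unit))=\End(\mathbb{C})=\mathbb{C}$. Hence $h_g=c\cdot\id$ with $c=\tilde h$.

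The main obstacle will be making the second and third steps rigorous at the level of the explicit lattice data: one must verify that inserting $h$ in the disk region and then re-applying $\prod_f H_f$ and $\mathcal{H}_p$ genuinely implements the abstract cobordism gluing, equivalently that the $\Delta_b$-block $e_0$ of the ground-state projector is exactly rank one and that the buffer between $\Delta_s$ and $\partial\Delta_b$ lets $h$ be commuted past the gluing. This is where the excision and locality properties of the relative state sum on $\Delta_b\times I$ from \cite{Kirillov11,KB10} enter; all of the topology, namely that $\Delta_b$ is a disk, is used only to produce the one-dimensionality of $V(\Delta_b;\unit)$, after which the scalar conclusion is automatic.
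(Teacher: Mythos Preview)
Your strategy is correct and is morally the same as the paper's: both arguments feed the locality of $h$ into a gluing decomposition and then kill everything with the one-dimensionality coming from the disk axiom. The paper, however, implements the cut differently and supplies the lattice-level verification you flag as the obstacle.

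Concretely, the paper first proves a commuting-diagram lemma (its Lemma~1) that replaces $h\in\End(L_s)$ by an operator $h_v\in\End(L_{s,v})$ on the admissible subspace, so that everything lives in the string-net spaces where Kirillov's machinery applies. Then, rather than cutting along $\partial\Delta_b$ and invoking $V(\Delta_b;\unit)\cong\mathbb{C}$ as you do, it works inside the annulus $A$ between $\Delta_s$ and $\partial\Delta_b$: it uses the edge-removal isomorphisms $re$ from \cite{Kirillov11} to collapse $A$ to a single edge, then applies the explicit gluing isomorphism $G$ from \cite{KB10} (summing over labels $Z\in Z(\mathcal{C})$) to cut along $A$ and cap both circles with disks. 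This turns the picture into a disjoint union of a \emph{closed} sphere $\Delta_s\cup D_\alpha$ carrying $h$ and a complementary closed surface, and the scalar conclusion comes from $\dim V(S^2)=1$.

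What the paper's route buys is that the isomorphisms $re$ and $G$ are already proved in \cite{Kirillov11,KB10} to intertwine the lattice projectors $\mathcal{H}_p$, so the ``main obstacle'' you identify is discharged by citation plus two commuting diagrams; moreover the same diagrams are reused verbatim in the TQO2 argument. Your route via $V(\Delta_b;l)$ is perfectly valid but would require you to set up the boundary-labelled lattice spaces $V(\Delta_b;l)$ and check directly that $\mathcal{H}_p$ respects the tensor splitting across $\partial\Delta_b$, which is exactly the work the paper outsources to $re$ and $G$.
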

\begin{proof} Let $W$ be the subspace of $L_{s,v}\otimes L_{\bar{c},v}$ spanned by the basis whose labels on the boundary of $\Delta_s$ and the boundary of $\Delta_{\bar{c}}$ are different. $L_v$ is isomorphic to the subspace of $L_{s,v}\otimes L_{\bar{c},v}$ spanned by the basis whose labels on the boundary of $\Delta_s$ and the bounary $\Delta_{\bar{c}}$ are the same. We have a natural decomposition
$$L_{s,v}\otimes L_{\bar{c},v}=L_v\oplus W
$$
with inner product such that $L_v,W$ are orthogonal to each other. For any $h\in End(L_s)$, define $h_v\in End (L_{s,v})$ by
$$L_{s,v}\stackrel{i_{s,v}}{\longrightarrow}L_s\stackrel{h}{\longrightarrow}L_s\stackrel{\prod_{f\in\Delta_s}H_f}{\longrightarrow}L_{s,v}
$$
where $i_{s,v}$ is the inclusion for $\prod_{f\in\Delta_s}H_f$.

We construct the following diagram
$$\xymatrix{
&L_s\otimes L_c\ar[r]^{h\otimes id}&L_s\otimes L_c\ar[dr]^{\prod_{f\in\Delta}H_f}& 
\\
L_v\ar[ur]^{i_v}\ar[dr]^i& & &L_v
\\
&L_{s,v}\otimes L_{\bar{c},v}\ar[r]^{h_v\otimes id}&L_{s,v}\otimes L_{\bar{c},v}\ar[ur]^p&
}
$$
where $i,p$ are inclusion and projection induced from the above decomposition. We show this diagram is commuting in \textbf{Lemma 1.}
\\
Next the problem is reduced to one of TQFTs. It suffices to prove the \textbf{Lemma 2.}
\end{proof}
\begin{mylem}
The above diagram is commuting.
\end{mylem}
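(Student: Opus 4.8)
The plan is to verify commutativity directly on basis vectors of $L_v$, tracking how each arrow acts on the edge coloring and on the face vectors. First I would fix notation for the edges of $\Delta$: let $E_\partial$ be the edges lying on $\partial\Delta_s=\partial\Delta_{\bar c}$, so that the edge set of $\Delta$ splits as (edges interior to $\Delta_s$) $\,\sqcup\, E_\partial\,\sqcup\,$ (edges not in $\Delta_s$). Under this splitting $L=L_s\otimes L_c$, whereas $L_{s,v}\otimes L_{\bar c,v}$ carries two copies of the colorings of $E_\partial$ and, by the identification recalled above, $L_v$ sits inside $L_{s,v}\otimes L_{\bar c,v}$ as the ``diagonal'' subspace where the two copies agree. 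Both $i_v$ and $i$ are then the single operation ``restrict a coloring of $\Delta$ to $\Delta_s$ and to its complement'', so once the two composites are written down the fact that they land in $L_v$ and agree there will be transparent.

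Two elementary observations will do the work. First: every edge of $E_\partial$ is, by definition of the boundary of a subcomplex of the closed surface $Y$, an edge of some $2$-cell outside $\Delta_s$, hence of some face of $\Delta_{\bar c}$; consequently a basis vector of $L_c$ (or of $L_{\bar c,v}$) lying in the flux subspace remembers the entire $E_\partial$-coloring, since its face vectors are pinned to the matching $\Hom$-summand. Second: $\prod_{f\in\Delta}H_f=\big(\prod_{f\in\Delta_s}H_f\big)\circ\big(\prod_{f\in\Delta_{\bar c}}H_f\big)$, because the $H_f$ pairwise commute and the faces of $\Delta_s$ together with those of $\Delta_{\bar c}$ exhaust the faces of $\Delta$; moreover $\prod_{f\in\Delta_{\bar c}}H_f$ touches only the face vectors attached to $\Delta_{\bar c}$-faces, which live in $L_c$ and are never moved by $h$.

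Then I would take a basis vector $\xi\in L_v$ with edge coloring $(X_e)$, writing $\xi=\xi_s\otimes\xi_c$ in $L$ and $\xi=\xi_s\otimes\xi_{\bar c}$ in $L_{s,v}\otimes L_{\bar c,v}$, and compute both composites. Along the top, $(h\otimes\id)\xi=h(\xi_s)\otimes\xi_c$; applying $\prod_{f\in\Delta_{\bar c}}H_f$ reads the edge colors around the $\Delta_{\bar c}$-faces and, because $\xi_c$ was flux-matching for the original $E_\partial$-coloring, annihilates every basis component of $h(\xi_s)$ whose $E_\partial$-coloring differs from the original one, leaving the rest and $\xi_c$ untouched --- that is, it acts as the projection $\Pi_\partial$ of the $L_s$-factor onto the original $E_\partial$-coloring; then $\prod_{f\in\Delta_s}H_f$ acts on the $\Delta_s$-faces, so the top composite sends $\xi$ to $\big(\prod_{f\in\Delta_s}H_f\big)\Pi_\partial\,h(\xi_s)\otimes\xi_c$. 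Along the bottom, $(h_v\otimes\id)\xi=\big(\prod_{f\in\Delta_s}H_f\big)h(\xi_s)\otimes\xi_{\bar c}$, and $p$, the projection $L_{s,v}\otimes L_{\bar c,v}=L_v\oplus W\to L_v$, keeps precisely the components whose $E_\partial$-coloring matches the one stored in $\xi_{\bar c}$, again the original one, so $p$ acts here as $\Pi_\partial$ on the $L_{s,v}$-factor; the bottom composite sends $\xi$ to $\Pi_\partial\big(\prod_{f\in\Delta_s}H_f\big)h(\xi_s)\otimes\xi_{\bar c}$. Since $\Pi_\partial$ acts only on edge colorings and $\prod_{f\in\Delta_s}H_f$ only on $\Delta_s$-face vectors, the two projections commute; under the two compatible identifications of $L_v$ the outputs would coincide, and extending by linearity would finish the proof.

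The step I expect to need the most care is this middle identification: that applying the complement flux projectors $\prod_{f\in\Delta_{\bar c}}H_f$ in the top row has literally the same effect as the orthogonal projection $p$ onto the diagonal in the bottom row. This rests on the first observation (a flux-matching basis vector on the complement determines the $E_\partial$-coloring) together with the fact that $h$ is supported on $L_s$, so the $\Delta_{\bar c}$-face vectors are never altered in either row; granting this, both operations collapse to the single edge-projection $\Pi_\partial$ and everything else is bookkeeping. This is precisely the point at which the lattice statement has been reduced to the TQFT input of Lemma 2.
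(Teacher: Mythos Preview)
Your proposal is correct and follows essentially the same approach as the paper. Both proofs verify the diagram on basis vectors of $L_v$ and hinge on the same mechanism: because the face vectors of complement faces in $\xi_c$ are already pinned to the Hom-summand matching the original edge coloring, the operator $\prod_{f\notin\Delta_s}H_f$ in the top row kills exactly those components of $h(\xi_s)$ whose $E_\partial$-coloring differs from the original, which is precisely what $p$ does in the bottom row. The paper writes this out as an explicit coefficient computation (summing over a basis of $L_s$ and then restricting to boundary-matching terms), whereas you package the same step as the projector $\Pi_\partial$; the only cosmetic difference is the order in which $\prod_{f\in\Delta_s}H_f$ and the boundary projection are applied, and your remark that they commute (since each $H_f$ preserves the edge-coloring decomposition) covers that.
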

\begin{proof}
For any $\bigotimes_{e\in\Delta} X_{e,0}\otimes\bigotimes_{f\in\Delta}(0,...,x_{j,0}^f,...,0)\in L_{v}$, we have
\begin{align*}
&UP(\bigotimes_{e\in\Delta}X_{e,0}\otimes\bigotimes_{f\in\Delta}(0,...,x_{j,0}^f,...,0))
\\
&=UP(\bigotimes_{e\in\Delta_s}X_{e,0}\otimes\bigotimes_{f\in\Delta_s}(0,...,x_{j,0}^f,...,0)\otimes\bigotimes_{e\notin\Delta_s}X_{e,0}\otimes\bigotimes_{f\notin\Delta_s}(0,...,x_{j,0}^f,...,0))
\\
&=\prod_{f\in\Delta}H_f(\sum_{(1)}a_0^{X_e,x_k^f}\bigotimes_{e\in\Delta_s}X_e\otimes\bigotimes_{f\in\Delta_s}(x_1^f,...,x_{|L|^{n(f)}}^f)\otimes\bigotimes_{e\notin\Delta_s}X_{e,0}\otimes\bigotimes_{f\notin\Delta_s}(0,...,x_{j,0}^f,...,0))
\\
&=\prod_{f\notin\Delta_s}H_f(\sum_{(1)}a_0^{X_e,x_k^f}\bigotimes_{e\in\Delta_s}X_e\otimes\bigotimes_{f\in\Delta_s}(0,...,x_l^f,...,0)\otimes\bigotimes_{e\notin\Delta_s}X_{e,0}\otimes\bigotimes_{f\notin\Delta_s}(0,...,x_{j,0}^f,...,0))
\\
&=\sum_{(2)}a_0^{X_e,x_k^f}\bigotimes_{e\in\Delta_s}X_e\otimes\bigotimes_{f\in\Delta_s}(0,...,x_l^f,...,0)\otimes\bigotimes_{e\notin\Delta_s}X_{e,0}\otimes\bigotimes_{f\notin\Delta_s}(0,...,x_{j,0}^f,...,0)
\end{align*}
where $(1)$ is over all the elements of the basis for $L_s$, $(2)$ is over the elements whose labels on the boundary of $\Delta_s$ are the same as $\bigotimes_{e\in\partial\Delta_s}X_{e,0}$, $x_l^f$ are for $\bigotimes_{e\in\Delta_s}X_e$ and $a_0^{X_e,x_k^f}$ are the coefficients for $\bigotimes_{e\Delta_s}X_{e,0}\otimes\bigotimes_{f\Delta_s}(0,...,x_{j,0}^f,...,0)$ and $\bigotimes_{e\in\Delta_s}X_e\otimes\bigotimes_{f\in\Delta_s}(x_1^f,...,x_{|L|^{n(f)}}^f)$ according to $h$.
\begin{align*}
&DOWN(\bigotimes_{e\in\Delta}X_{e,0}\otimes\bigotimes_{f\in\Delta}(0,...,x_{j,0}^f,...,0))
\\
&=p\cdot(h_v\otimes id)(\bigotimes_{e\in\Delta_s}X_{e,0}\otimes\bigotimes_{f\in\Delta_s}(0,...,x_{j,0}^f,...,0)\otimes\bigotimes_{e^{\prime}\in\Delta_{\bar{c}}}X_{e^{\prime},0}\otimes\bigotimes_{f\in\Delta_{\bar{c}}}(0,...,x_{j,0}^f,...,0))
\\
&=p(\prod_{f\in\Delta_s}H_f(\sum_{(1)}a_0^{X_e,x_k^f}\bigotimes_{e\in\Delta_s}X_e\otimes\bigotimes_{f\in\Delta_s}(x_1^f,...,x_{|L|^{n(f)}}^f))\otimes\bigotimes_{e^{\prime}\in\Delta_{\bar{c}}}X_{e^{\prime},0}\otimes\bigotimes_{f\in\Delta_{\bar{c}}}(0,...,x_{j,0}^f,...,0))
\\
&=p(\sum_{(1)}a_0^{X_e,x_k^f}\bigotimes_{e\in\Delta_s}X_e\otimes\bigotimes_{f\in\Delta_s}(0,...,x_l^f,...,0)\otimes\bigotimes_{e^{\prime}\in\Delta_{\bar{c}}}X_{e^{\prime},0}\otimes\bigotimes_{f\in\Delta_{\bar{c}}}(0,...,x_{j,0}^f,...,0))
\\
&=\sum_{(2)}a_0^{X_e,x_k^f}\bigotimes_{e\in\Delta_s}X_e\otimes\bigotimes_{f\in\Delta_s}(0,...,x_l^f,...,0)\otimes\bigotimes_{e^{\prime}\in\Delta_{\bar{c}}}X_{e^{\prime},0}\otimes\bigotimes_{f\in\Delta_{\bar{c}}}(0,...,x_{j,0}^f,...,0)
\\
&=\sum_{(2)}a_0^{X_e,x_k^f}\bigotimes_{e\in\Delta_s}X_e\otimes\bigotimes_{f\in\Delta_s}(0,...,x_l^f,...,0)\otimes\bigotimes_{e\notin\Delta_s}X_{e,0}\otimes\bigotimes_{f\notin\Delta_s}(0,...,x_{j,0}^f,...,0)
\end{align*}
where $X_{e,0}=X_{e^{\prime},0}$ for any $e=e^{\prime}\in\partial\Delta_s=\partial\Delta_{\bar{c}}$.
\end{proof}
\begin{mylem}
For any $h_v\in End(L_{s,v})$, we can define $h_g\in End(v)$ by
$$V\stackrel{i_p}{\longrightarrow}L_v\stackrel{i}{\longrightarrow}L_{s,v}\otimes L_{\bar{c},v}\stackrel{h_v\otimes\text{id}}{\longrightarrow}L_{s,v}\otimes L_{\bar{c},v}\stackrel{p}{\longrightarrow}L_v\stackrel{\mathcal{H}_p}{\longrightarrow}V
$$
When $\Delta_s$ is in the interior of subcomplex $\Delta_b$ and $\Delta_b$ is homeomorphic to a disk, $h_g=c\cdot id$ for some $c\in\mathbb{C}$.
\end{mylem}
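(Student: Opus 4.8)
The lemma is the lattice incarnation of the disk axiom: an operator assembled from data supported inside a cellulated disk acts on the ground space $V$ as a scalar. The plan is to reduce it to the disk axiom and the gluing axiom (locality) of the Turaev--Viro modular functor, which the reformulated Levin--Wen model realizes on surfaces via the results recalled from \cite{Kirillov11} in Section~\ref{LW}. First I would cut $\Sigma$ along the circle $C:=\partial\Delta_b$, writing $\Sigma=\Delta_b\cup_C\Delta_b^{c}$ with $\Delta_b^{c}:=\overline{\Sigma\setminus\Delta_b}$; the hypotheses that $\Delta_b$ is homeomorphic to a disk and that $\overline{\Delta_s}\subset\mathrm{int}\,\Delta_b$ guarantee that the annular collar $\Delta_b\setminus\Delta_s$ separates $\Delta_s$ from $C$. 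The gluing axiom then yields
\[
V(\Sigma)\;\cong\;\bigoplus_{a\in\Pi}V(\Delta_b;a)\otimes V(\Delta_b^{c};a^{*}),
\]
and the disk axiom collapses the sum to $V(\Sigma)\cong V(\Delta_b;\unit)\otimes V(\Delta_b^{c};\unit)$ with $V(\Delta_b;\unit)\cong\mbbC$ spanned by the canonical vacuum state $\xi$ over $\Delta_b$. Thus every $\psi\in V$ is of the form $\psi=\xi\otimes\phi$, and the whole factor $V(\Delta_b^{c};\unit)$ is exactly the data living on the cells of $\Delta_b^{c}$.

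Next I would chase the maps defining $h_g=\mathcal{H}_p\circ p\circ(h_v\otimes\id)\circ i\circ i_p$ through this factorization. Because $h_v\in\End(L_{s,v})$ changes only data carried by cells of $\Delta_s$, and $\overline{\Delta_s}$ is disjoint from $C$, one checks that: $h_v\otimes\id$ and the inclusion $i$ fix every edge and face of $\Delta_b^{c}$ and of the collar; the diagonal projection $p$ post-selects on the $\partial\Delta_s$-labels being unchanged, hence also fixes all data on $\Delta_b^{c}$; and $\mathcal{H}_p=\prod_v H_v$ applied to a state obtained this way from a ground state reduces to $\prod_{v\in\mathrm{int}\,\Delta_b}H_v$, since each $H_v$ with $v\in C\cup\mathrm{int}\,\Delta_b^{c}$ only involves faces and edges that were not modified and that already satisfy the ground conditions, so it acts as the identity. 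Consequently $h_g$ does not touch any data on $\Delta_b^{c}$ and maps $V$ to $V$, so under the factorization of the previous paragraph $h_g=A\otimes\id$ for some $A\in\End\bigl(V(\Delta_b;\unit)\bigr)=\End(\mbbC\xi)$. Since $\dim V(\Delta_b;\unit)=1$, $A=c\cdot\id$ for a scalar $c\in\mbbC$, whence $h_g=c\cdot\id_V$; combined with Lemma~1 this also gives the Proposition.

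The hard part will be the middle step: justifying rigorously, at the lattice level, that $h_g$ genuinely splits as $A\otimes\id$ with respect to the gluing decomposition along $C$. This is not automatic, since the vertices on $C$ are shared by $\Delta_b$ and $\Delta_b^{c}$, so $\mathcal{H}_p$ does not literally decompose into a tensor product of projectors on the two pieces; one must use the collar $\Delta_b\setminus\Delta_s$ as a buffer --- equivalently, invoke the annulus axiom to enlarge $\Delta_s$ to $\Delta_b$ and express $h_g$ through an operator supported in the honest disk $\Delta_b$, whose ground space relative to $C$ is one-dimensional by the disk axiom --- and then match this lattice construction with the gluing description of Turaev--Viro Hilbert spaces in \cite{Kirillov11}. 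The residual bookkeeping (commutation of the $H_f$ and $H_v$, the description of $p$ as keeping the components with unchanged $\partial\Delta_s$-labels, and the fact that the $a$-grading at $C$ is determined by a collar neighborhood) is routine in the framework of Section~\ref{LW}.
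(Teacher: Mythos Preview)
Your strategy is correct and is the same one the paper follows: cut the surface near the disk region, invoke the gluing formula for the Turaev--Viro modular functor, use the disk axiom to collapse the label sum to the vacuum and obtain a one-dimensional tensor factor, and conclude that $h_g$ acts as a scalar on that factor tensored with the identity on the complement.

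The implementation differs in two related ways. First, you cut along the single circle $C=\partial\Delta_b$ and use $\dim V(\Delta_b;\unit)=1$; the paper instead removes the whole annular collar $A=\Delta_b\setminus\Delta_s$, simplifies it by the edge-removal isomorphisms $re$ of \cite{Kirillov11} until only one transversal edge remains, and then caps both boundary circles with disks $D_\alpha,D_\beta$ to obtain a \emph{sphere} $\Delta_s\cup D_\alpha$, using $\dim V_{S^2}=1$. Second, the ``hard part'' you flag---showing at the lattice level that $h_g$ genuinely factors as $A\otimes\id$ across the cut---is exactly what the paper carries out via the explicit commuting diagram built from $re$ and the gluing isomorphism $G$ of \cite{KB10}; these isomorphisms are the concrete lattice avatar of the annulus/gluing axioms you invoke abstractly. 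Your direct cut along $\partial\Delta_b$ is cleaner conceptually, but the paper's annulus-removal avoids the shared-vertex issue you note at $C$ and makes the label sum $\bigoplus_{Z\in Z(\mathcal{C})}V_{cut,Z}=V_{cut,1}$ literally a statement from \cite{KB10} rather than something to be argued separately.
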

\begin{proof}Without loss of generality, we assume that $\Delta_s$ is the maximal disk formed by 2-cells in the interior of $\Delta_b$. Consider the 2-cells $f$ of $\Delta_b$ on the boundary which means that there exists a vertex of $f$ on the boundary of $\Delta_b$. Since we choose a fine cell complex, they form an annulus $A$ around $\Delta_s$. For any edge $e$ whose interior is in the interior of A, two vertices of $e$ have to be on two boundaries of $A$ respectively. For any $e$ as described above, we construct new $\Delta_{re}$ from $\Delta$ by removing $e$ and denote closure of complement of $\Delta_s$ by $\Delta_{\bar{c},re}$. Define the following vector spaces.
\begin{align*}
&L_{re,v}=\bigoplus_{l_{re}}\bigotimes_{f\in\Delta_{re}}\text{Hom}(1,X_1^{f,l_{re}}\otimes\cdots\otimes X_{n(f)}^{f,l_{re}})
\\
&L_{\bar{c},re,v}=\bigoplus_{l_{\bar{c},re}}\bigotimes_{f\in\Delta_{\bar{c},re}}\text{Hom}(1,X_1^{f,l_{\bar{c},re}}\otimes\cdots\otimes X_{n(f)}^{f,l_{\bar{c},re}})
\end{align*}

We have a natural decomposition
$$L_{s,v}\otimes L_{\bar{c},re,v}=L_{re,v}\oplus W_{re}
$$
where $W_{re}$ is defined as $W$ above. The decomposition induces inclusion $j_m$, projection $p_m$.

Define the isomorphism $re:L_{v}\longrightarrow L_{re,v}$ and $re:L_{\bar{c},v}\longrightarrow L_{\bar{c},re,v}$ by
\begin{align*}
&\bigoplus_{l_*}\bigotimes_{f\in\Delta}\text{Hom}(1,X_1^{f,l_*}\otimes\cdots\otimes X_{n(f)}^{f,l_*})
\\
&=\bigoplus_{l_*\backslash e}\bigotimes_{f\in\Delta\backslash f_1,f_2}\text{Hom}(1,X_1^{f,l_*\backslash e}\otimes\cdots\otimes X_{n(f)}^{f,l_*\backslash e})\otimes\bigoplus_{X\in L}H(f_1,l_*\backslash e,X)\otimes H(f_2,l_*\backslash e,X)
\\
&=\bigoplus_{l_*\backslash e}\bigotimes_{f\in\Delta\backslash f_1,f_2}\text{Hom}(1,X_1^{f,l_*\backslash e}\otimes\cdots\otimes X_{n(f)}^{f,l_*\backslash e})\otimes H(f_1\cup_{e}f_2,l_*\backslash e)
\\
&
\end{align*}
where $H(f_i,l_*\backslash e, X)$ corresponds to the colorings whose restriction on $e$ is $X$, the second equation comes from composition with a twist factor $d_{X}$ and $*=\ \ ,\bar{c}$. According to \cite{Kirillov11}, we have
$$re\cdot\mathcal{H}_p=\mathcal{H}_{p,re}\cdot re
$$
where $\mathcal{H}_{p,re}$ acts on $L_{re,v}$.

After processing the above remove operation, we transform $\Delta$ to $\Delta_{re}$ such that there is only one $e$ whose interior is in the interior of $A$ as shown in \ref{fig2}.
\begin{figure}
    \centering
\begin{tikzpicture}[scale=0.7]
\draw (0,0) circle (1);
\draw (0,0) circle (2);
\filldraw[blue,opacity=0.2] (0,0) circle (1);
\filldraw[red,opacity=0.2] (0,0) circle (2);
\node at (3,0) {$\stackrel{re}{\longrightarrow}$};
\draw (6,0) circle (1);
\draw (6,0) circle (2);
\filldraw[blue,opacity=0.2] (6,0) circle (1);
\filldraw[red,opacity=0.2] (6,0) circle (2);
\filldraw (1,0) circle(1pt);
\filldraw (2,0) circle(1pt);
\filldraw (-1,0) circle(1pt);
\filldraw (-2,0) circle(1pt);
\filldraw (0,1) circle(1pt);
\filldraw (0,2) circle(1pt);
\filldraw (0,-1) circle(1pt);
\filldraw (0,-2) circle(1pt);
\filldraw (1.4,1.4) circle(1pt);
\filldraw (7,0) circle(1pt);
\filldraw (8,0) circle(1pt);

\draw (1,0)--(2,0);
\draw (-1,0)--(-2,0);
\draw (0,1)--(0,2);
\draw (0,-1)--(0,-2);
\draw (0,1)--(1.4,1.4);
\draw (1,0)--(1.4,1.4);
\draw (7,0)--(8,0);

\node at (0,0) {$\Delta_s$};
\node at (0.7,1.5) {$\Delta_b$};
\node at (6,0) {$\Delta_s$};
\node at (6.7,1.5) {$\Delta_{b,re}$};

\end{tikzpicture}

    \caption{$\Delta_{re}$}
    \label{fig2}
\end{figure}

We construct $\Delta_{cut}$ by cutting $A$ from $\Delta_{re}$ and gluing back disks $D_{\alpha},D_{\beta}$ along the boundaries, and $\Delta_{\bar{c},cut}$ be the closure of complement of $\Delta_s$. For any $Z\in Z(\mathcal{C})$, define the following vector spaces.
\begin{align*}
&L_{cut,v,Z}=\bigoplus_{l_{cut,Z}}\bigotimes_{f\in\Delta_{cut}}H(f,l_{cut})\\
&L_{\bar{c},cut,v,Z}=\bigoplus_{l_{\bar{c},cut,Z}}\bigotimes_{f\in\Delta_{\bar{c},cut}}H(f,l_{\bar{c},cut})
\end{align*}

We have a natural decomposition.
$$L_{s,v}\otimes L_{\bar{c},cut,v,Z}=L_{cut,v,Z}\oplus W_{cut,Z}
$$
where $W_{cut,Z}$ is defined as $W$ above. The decomposition induces inclusion $j_{d,Z}$, projection $p_{d,Z}$.

According to construction in \cite{KB10}, there is isomorphism $G:\bigoplus_{Z\in Z(\mathcal{C})}L_{cut,v,Z}\longrightarrow L_{re,v}$ and $G:\bigoplus_{Z\in Z(\mathcal{C})}L_{\bar{c},cut,v,Z}\longrightarrow L_{\bar{c},re,v}$ satisfying
$$G\cdot\mathcal{H}_{p,cut,Z}=\mathcal{H}_{p,re}\cdot G
$$

We denote $L_{s,v}\otimes L_{\bar{c},re,v}$ by $L_m$, $\bigoplus_{Z\in Z(\mathcal{C})}L_{s,v}\otimes L_{\bar{c},cut,v,Z}$ by $L_d$. Then we construct the following commuting diagram.
$$\xymatrix{
V\ar[r]^{i_p}\ar[d]^{re}
&L_v\ar[d]^{re}\ar[r]^j
&L_{s,v}\otimes L_{\bar{c},v}\ar[r]^{h_v\otimes id}\ar[d]^{id\otimes re}
&L_{s,v}\otimes L_{\bar{c},v}\ar[r]^p\ar[d]^{id\otimes re}
&L_v\ar[r]^{\mathcal{H}_p}\ar[d]^{re}
&V\ar[d]^{re}
\\
V_{re}\ar[r]^{i_{re,p}}\ar[d]^{G^{-1}}
&L_{re,v}\ar[r]^{j_m}\ar[d]^{G^{-1}}
&L_m\ar[r]^{h_v\otimes id}\ar[d]^{id\otimes G^{-1}}
&L_m\ar[r]^{p_m}\ar[d]^{id\otimes G^{-1}}
&L_{re,v}\ar[r]^{\mathcal{H}_{re,p}}\ar[d]^{G^{-1}}
&V_{re}\ar[d]^{G^{-1}}
\\
\bigoplus V_{cut,Z}\ar[r]^{i_{cut,p}}
&\bigoplus L_{cut,v,Z}\ar[r]^{j_d}
&L_d\ar[r]^{h_v\otimes id}
&L_d\ar[r]^{p_d}
&\bigoplus L_{cut,v,Z}\ar[r]^{\mathcal{H}_{cut,p}}
&\bigoplus V_{cut,Z}
}
$$
where sum is over $Z\in Z(\mathcal{C})$.

According to \cite{KB10}, $\bigoplus_{Z\in\mathcal{C}}V_{cut,Z}=V_{cut,1}$. Since $\Delta_{cut}$ is the disjoint union of $\Delta_{\bar{c},cut}\cup D_{\beta}$ and $\Delta_s\cup D_{\alpha}$, $L_{cut,v,1}=L_{\Delta_{\bar{c},cut}\cup D_{\beta},1}\otimes L_{\Delta_s\cup D_{\alpha},1}$ and $\mathcal{H}_{cut,p,1}=\mathcal{H}_{\Delta_{\bar{c},cut}\cup D_{\beta},1}\otimes\mathcal{H}_{\Delta_s\cup D_{\alpha},1}$. Since $h_v\in End(L_{s,v})$, $p_{d,1}\cdot(h_v\otimes id)_1\cdot j_d=h_{S^2}\otimes id\in End(L_{cut,1})$. Then the composition of bottom row is $h_{S^2}\otimes id\in End(V_{cut,1})=End(V_{\Delta_s\cup D_{\alpha},1}\otimes V_{\Delta_{\bar{c},cut}\cup D_{\beta},1})$. Since $\Delta_s\cup D_{\alpha}$ is a sphere, dim$V_{\Delta_s\cup D_{\alpha},1}=1$, $h_{S^2}=c$ for some $c\in\mathcal{C}$. Then $h_{S^2}\otimes id=c\cdot id$.
\end{proof}

\subsection{TQO2}
Now we prove that the Levin-Wen model satisfies TQO2, i.e., for any $h:L_s\longrightarrow L_s$, $h\otimes id\prod_{f\in\Delta}H_f\prod_{v\in\Delta}H_v=0$ implies 
$$h\otimes id\prod_{f\in\Delta_b}H_f\prod_{v\in\Delta_s}H_v=0
$$.

Let $V_b=\prod_{v\in\Delta_s}H_v\prod_{f\in\Delta_b}H_f(L)$. We consider the following diagram.
$$\xymatrix{
V\ar[r]^{i}\ar[d]^{re}
&V_b\ar[r]^{j}\ar[d]^{re}
&L_s\otimes L_c\ar[r]^{h\otimes id}
&L_s\otimes L_c
\\
V_{re}\ar[r]^{i_{re}}\ar[d]^{G^{-1}}
&V_{re,b}\ar[d]^{G^{-1}}
& & 
\\
\bigoplus_Z V_{cut,Z}\ar[r]^{i_{cut}}
&\bigoplus_Z V_{cut,Z,b}
& & }
$$
where $i,j,i_{re},i_{cut}$ denote the inclusions corresponding to $\prod_{f\notin\Delta_b}H_f\prod_{v\notin\Delta_s}H_s$, $\prod_{f\in\Delta_b}H_f\prod_{v\in\Delta_s}H_s$, $\prod_{f\notin\Delta_{b,re}}H_f\prod_{v\notin\Delta_s}H_s$, $\prod_{f\in\Delta_{\bar{c},cut}\cup D_{\beta}}H_f\prod_{v\in\Delta_{\bar{c},cut}\cup D_{\beta}}H_v$ respectively. Then all we need is to show $h\otimes id \circ j\circ i=0$ implies $h\otimes id\circ j=0$.

First we show that two rectangles in the diagram are commuting. Compare with the argument in TQO1, all we need is to replace the TQFT operators by local operators. Since $H_f$ are the same in $\Delta\backslash\Delta_b$, $\Delta_{re}\backslash\Delta_{b,re}$, $\Delta_{\bar{c},cut}$, we just consider $H_v$. Actually we have the following two figures Figs. $3$ and $4$. 
The blue lines represent the edges removed.

\begin{figure}[!ht]\label{TQO2-1}
    \centering
\begin{tikzpicture}
\draw (-0.5,0)--(6.5,0);
\draw (0,0.5)--(0,-3.5);
\draw (-0.5,-3)--(6.5,-3);
\draw (6,0.5)--(6,-3.5);
\draw[blue] (3,0)--(3,-3);

\node(1)[circle,draw,scale=0.7] at (1.5,-1.5) {$x_1$};
\node(2)[circle,draw,scale=0.7] at (4.5,-1.5) {$x_2$};
\node(3)[circle,draw,scale=0.5] at (1.5,-0.5) {$y_1$};
\node(4)[circle,draw,scale=0.5] at (2.5,-1.5) {$y$};
\node(5)[circle,draw,scale=0.5] at (3.5,-1.5) {$y$};
\node(6)[circle,draw,scale=0.5] at (4.5,-0.5) {$y_2$};

\draw[red] (1)--(3);
\draw[red] (1)--(4);
\draw[red] (4)--(5);
\draw[red] (5)--(2);
\draw[red] (2)--(6);

\draw[dashed] (3)--(1.5,0.5);
\draw[dashed] (1)--(-0.5,-1.5);
\draw[dashed] (1)--(1.5,-3.5);
\draw[dashed] (6)--(4.5,0.5);
\draw[dashed] (2)--(4.5,-3.5);
\draw[dashed] (2)--(6.5,-1.5);
\draw[red] (3)--(4);
\draw[red] (5)--(6);

\node[left] at (1.5,-1) {$X_1$};
\node[left] at (1.5,0) {$X_1^{\prime}$};
\node[below] at (2,-1.5) {$X$};
\node[below] at (4,-1.5) {$X$};
\node[below] at (3,-1.5) {$X^{\prime}$};
\node[right] at (4.5,-1) {$X_2$};
\node[right] at (4.5,0) {$X_2^{\prime}$};
\node at (2.2,-1) {$Y$};
\node at (3.8,-1) {$Y$};
\fill[blue,fill opacity=0.2] (0.3,-0.2) rectangle (5.7,-2.8);

\node at (-4,-1.5) {$\sum_{Y,X^{\prime},X_i^{\prime},y,y_i}\frac{d_Y\sqrt{d_{X^{\prime}}}\sqrt{d_{X}d_{X^{\prime}}}\prod_i\sqrt{d_{X_i}d_{X_i^{\prime}}}}{D^2}$};

\end{tikzpicture}
\begin{tikzpicture}
\draw (-0.5,0)--(6.5,0);
\draw (0,0.5)--(0,-3.5);
\draw (-0.5,-3)--(6.5,-3);
\draw (6,0.5)--(6,-3.5);
\draw[blue] (3,0)--(3,-3);

\node(1)[circle,draw,scale=0.7] at (1.5,-1.5) {$x_1$};
\node(2)[circle,draw,scale=0.7] at (4.5,-1.5) {$x_2$};
\node(3)[circle,draw,scale=0.5] at (1.5,-0.5) {$y_1$};
\node(6)[circle,draw,scale=0.5] at (4.5,-0.5) {$y_2$};

\draw[red] (1)--(3);
\draw[red] (1)--(2);
\draw[red] (2)--(6);
\draw[red] (3)--(6);

\draw[dashed] (3)--(1.5,0.5);
\draw[dashed] (1)--(-0.5,-1.5);
\draw[dashed] (1)--(1.5,-3.5);
\draw[dashed] (6)--(4.5,0.5);
\draw[dashed] (2)--(4.5,-3.5);
\draw[dashed] (2)--(6.5,-1.5);

\node[left] at (1.5,-1) {$X_1$};
\node[left] at (1.5,0) {$X_1^{\prime}$};
\node[below] at (3,-1.5) {$X$};
\node[right] at (4.5,-1) {$X_2$};
\node[right] at (4.5,0) {$X_2^{\prime}$};
\node at (3,-0.5) {$Y$};

\fill[blue,fill opacity=0.2] (0.3,-0.2) rectangle (5.7,-2.8);

\node at (-3,-1.5) {$=\sum_{Y,X_i^{\prime},y_i}\frac{d_Y\sqrt{d_X}\prod_i\sqrt{d_{X_i}d_{X_i^{\prime}}}}{D^2}$};

\end{tikzpicture}

\begin{tikzpicture}
\draw (-0.5,0)--(6.5,0);
\draw (0,0.5)--(0,-3.5);
\draw (-0.5,-3)--(6.5,-3);
\draw (6,0.5)--(6,-3.5);
\draw[blue] (3,0)--(3,-3);

\node(1)[circle,draw,scale=0.7] at (3,-1.5) {$x_1\circ_{X}x_2$};
\node(3)[circle,draw,scale=0.5] at (1.5,-0.5) {$y_1$};
\node(6)[circle,draw,scale=0.5] at (4.5,-0.5) {$y_2$};

\draw[red] (1)--(3);
\draw[red] (3)--(6);
\draw[dashed] (3)--(1.5,0.5);
\draw[dashed] (1)--(-0.5,-1.5);
\draw[dashed] (1)--(1.5,-3.5);
\draw[dashed] (6)--(4.5,0.5);
\draw[dashed] (1)--(4.5,-3.5);
\draw[dashed] (1)--(6.5,-1.5);
\draw[red] (1)--(6);

\node[left] at (2,-1) {$X_1$};
\node[left] at (1.5,0) {$X_1^{\prime}$};
\node[right] at (4,-1) {$X_2$};
\node[right] at (4.5,0) {$X_2^{\prime}$};
\node at (3,-0.5) {$Y$};

\fill[blue,fill opacity=0.2] (0.3,-0.2) rectangle (5.7,-2.8);

\node at (-3,-1.5) {$=\sum_{Y,X_i^{\prime},y_i}\frac{d_Y\sqrt{d_X}\prod_i\sqrt{d_{X_i}d_{X_i^{\prime}}}}{D^2}$};

\end{tikzpicture}
    \caption{$re\circ H_{v,loc}=H_{v,re,loc}\circ re$}
    \label{fig3}
\end{figure}

\begin{figure}[!ht]\label{TQO2-2}
    \centering
\begin{tikzpicture}
\node[circle,draw,scale=0.7](1) at (0,0) {$y_r$};
\node[circle,draw,scale=0.7](2) at (1.5,0) {$y$};
\node[circle,draw,scale=0.7](3) at (0,-1.5) {$x_{\alpha}$};
\node[circle,draw,scale=0.7](4) at (0,-3) {$y_l$};
\node[circle,draw,scale=0.7](5) at (1.5,-3) {$y$};
\node[circle,draw,scale=0.7](6) at (3,-1.5) {$x_{\beta}$};

\draw[dashed] (1)--(0,1.5);
\draw[red] (1)--(3);
\draw[red] (1)--(2);
\draw[red] (3)--(4);
\draw[red] (4)--(5);
\draw[red] (2)--(5);
\draw[dashed] (4)--(0,-4.5);
\draw[dashed] (5)--(1.5,-4.5);
\draw[dashed] (2)--(1.5,1.5);
\draw[dashed] (6)--(3,1.5);
\draw[dashed] (3)--(-1.5,-1.5);
\draw[red] (3)--(1.4,-1.5);
\draw[red] (1.6,-1.5)--(6);

\node[above] at (0,1.5) {$X_{r}^{\prime}$};
\node[above] at (1.5,1.5) {$X^{\prime}$};
\node[above] at (3,1.5) {$X_{\beta}$};
\node[below] at (0,-4.5) {$X_{l}^{\prime}$};
\node[below] at (1.5,-4.5) {$X^{\prime*}$};
\node[left] at (0,-0.75) {$X_{r}$};
\node[left] at (0,-2.25) {$X_{l}$};
\node[below] at (-1,-1.5) {$X_{m}$};
\node[right] at (1.5,-0.75) {$X$};
\node[below] at (0.75,0) {$Y$};
\node[below] at (0.75,-3) {$Y$};
\node[below] at (0.75,-1.5) {$Z$};

\fill[blue,fill opacity=0.2] (-1,1) rectangle (3.5,-4);

\node at (-5.5,-1.5) {$\sum_{Y,X,X^{\prime},X_l^{\prime},X_r^{\prime},y,y_l,y_r}\frac{d_Y\sqrt{d_{X^{\prime}}d_Zd_{X_r}d_{X_r^{\prime}}d_{X_l}d_{X_l^{\prime}}}d_X}{D^2}$};

\end{tikzpicture}
\begin{tikzpicture}
\node[circle,draw,scale=0.7](1) at (0,0) {$y_r$};
\node[circle,draw,scale=0.7](3) at (0,-1.5) {$x_{\alpha}$};
\node[circle,draw,scale=0.7](4) at (0,-3) {$y_l$};
\node[circle,draw,scale=0.7](6) at (3,-1.5) {$x_{\beta}$};

\draw[dashed] (1)--(0,1.5);
\draw[red] (1)--(3);
\draw[red] (1)--(0.75,0);
\draw[red] (0.75,0)--(0.75,-3);
\draw[red] (3)--(4);
\draw[red] (4)--(0.75,-3);
\draw[dashed] (4)--(0,-4.5);
\draw[dashed] (6)--(3,1.5);
\draw[dashed] (3)--(-1.5,-1.5);
\draw[dashed] (1.5,1.5)--(1.5,-4.5);
\draw[red] (3)--(0.65,-1.5);
\draw[red] (0.85,-1.5)--(1.4,-1.5);
\draw[red] (1.6,-1.5)--(6);

\node[above] at (0,1.5) {$X_{r}^{\prime}$};
\node[above] at (1.5,1.5) {$X^{\prime}$};
\node[above] at (3,1.5) {$X_{\beta}$};
\node[below] at (0,-4.5) {$X_{l}^{\prime}$};
\node[left] at (0,-0.75) {$X_{r}$};
\node[left] at (0,-2.25) {$X_{l}$};
\node[below] at (-1,-1.5) {$X_{m}$};
\node[right] at (0.75,0) {$Y$};
\node[below] at (2.5,-1.5) {$Z$};
\node[below] at (1.5,-4.5) {$X^{\prime*}$};

\fill[blue,fill opacity=0.2] (-1,1) rectangle (3.5,-4);

\node at (-5,-1.5) {$=\sum_{Y,X^{\prime},X_l^{\prime},X_r^{\prime},y_l,y_r}\frac{d_Y\sqrt{d_{X^{\prime}}d_Zd_{X_l}d_{X_l^{\prime}}d_{X_r}d_{X_r^{\prime}}}}{D^2}$};

\end{tikzpicture}
    \caption{$G\circ H_{v,cut,loc}=H_{v,re,loc}\circ G$}
    \label{fig4}
\end{figure}

Since all the vertical arrows are isomorphic, we just show $h\otimes id\circ j\circ re\circ G^{-1}\circ i_{cut}=0$ implies $h\otimes id\circ j\circ re\circ G^{-1}=0$. Since $\Delta_s\cup D_{\alpha}$ is a sphere, according to TQO1, the bottom row becomes $V_{cut,1}\stackrel{id\otimes i_{out}}{\longrightarrow}V_{cut,1,b}$ where $V_{cut,1}=V_{\Delta_s\cup D_{\alpha},1}\otimes V_{\Delta_{\bar{c},cut}\cup D_{\beta},1}$, $V_{cut,1,b}=V_{\Delta_s\cup D_{\alpha},1}\otimes H_{D_{\beta}}(L_{\Delta_{\bar{c},cut}\cup D_{\beta},1})$. Since $\text{dim}(V_{\Delta_s\cup D_{\alpha},1})=1$, suppose $V_{\Delta_s\cup D_{\alpha},1}$ is spanned by $c\in L_{V_{\Delta_s\cup D_{\alpha},1}}$. According to the construction of $G,re$, for any $c^{\prime}\in V_{\Delta_{\bar{c},cut}\cup D_{\beta},1}$,
$$j\circ re\circ G^{-1}\circ i_{cut}(c\otimes c^{\prime})=c_s\otimes c^{\prime\prime}\in L_s\otimes L_c
$$
where $c_s$ is the restriction of $c$ on $\Delta_s$ and $c^{\prime\prime}$ depends on $c,c^{\prime}$. Since $h\in End(L_s)$, $h\otimes id (c_s\otimes c^{\prime\prime})=h(c_s)\otimes c^{\prime\prime}$. $h\otimes id\circ j\circ re\circ G^{-1}\circ i_{out}=0$ and $V_{\Delta_{\bar{c},cut}\cup D_{\beta},1}\neq0$ imply $h(c_s)=0$. Thus for any $c_{cut}\in H_{D_{\beta}}(L_{\Delta_{\bar{c},cut}\cup D_{\beta},1})$
$$h\otimes id \circ j\circ re\circ G^{-1}(c_s\otimes c_{cut})=h(c_s)\otimes c_{cut}^{\prime}=0
$$
\clearpage
\newpage

\section{Hamiltonian schemas for DW TQFTs}\label{DW}

The formalism in \cite{QW20} is used for DW theory in this section, which is a generalization of \cite{Wakui92}.  We realize DW TQFTs by the following Hamiltonion schemas based on this approach. Our proof of TQO1 and TQO2 follows \cite{Cui19}.

\subsection{The Hamiltonian schemas to realize DW}
Let $M$ be a compact smooth $n$-manifold with triangulation $\Delta$, and $G$ a finite group. We define the coloring $\sigma_{\Delta}$ on $\Delta$ which assigns to each oriented edge an element of $G$ and satisfies compatible conditions. Then we have vector spaces $\tilde{V}(\Delta,G)=\mathbb{C}\{\sigma_{\Delta}\}$ and $V(\Delta,G)$ which is the image of $\tilde{V}(\Delta,G)$ under $M\times I$ coming from DW TQFT. If no confusion arises, we will omit $G$.

Choosing an orientation for each edge of $\Delta$, we define the Hilbert space $L=\bigotimes_{e\in\Delta}\mathbb{C}^G$. For any vertex $v$, edge $e_v$ incident to $v$, $g\in G$ and $g_e\in G$ labeling edge $e$, define the action of $g$ on $g_e$ by $g\cdot g_e=g^{-1}g_e$ if $e$ is from $v$ and $g\cdot g_e=g_eg$ if $e$ is to $v$. Define vertex operator $H_v$ on $L$ by
$$H_v(\bigotimes_{e\in T}g_e)=\frac{1}{|G|}\sum_{g\in G}\bigotimes_{e_v}g\cdot g_{e_v}\otimes\bigotimes_{\text{other}}g_e
$$
For any triangle $f$, choose a cycle orientation $e_1e_2e_3$ on $f$. Define face operator $H_f$ on $L$ by $H_f=id$ if $g_{e_1}g_{e_2}g_{e_3}=1$ and $H_f=0$ otherwise. $H_f$ deos not depend on the choice of orientation we choose above. Then we have $H_v,H_f$ are projective and $H_vH_{v^{\prime}}=H_{v^{\prime}}H_v$, $H_fH_{f^{\prime}}=H_{f^{\prime}}H_f$, $H_vH_f=H_fH_v$. Moreover we have $\tilde{V}(\Delta)=\prod_{f}H_f(L)$ and $V(\Delta)=\prod_{v,f}H_vH_f(L)$.

\begin{mythm}
The state-sum schema for DW TQFT is topological.
\end{mythm}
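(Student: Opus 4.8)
The plan is to verify the four defining conditions TQO0--TQO3 for the schema $\mathcal{H}(M,\Delta)$ with $L=\bigotimes_{e\in\Delta}\mathbb{C}^G$ and $H_\Delta=\sum_{v}(1-H_v)+\sum_{f}(1-H_f)$, adapting the Levin--Wen argument above and the proof of \cite{Cui19} for Kitaev models to the DW formalism of \cite{QW20}. Two of the conditions are immediate. Each $H_v=\frac{1}{|G|}\sum_{g\in G}g\cdot$ and each $H_f$ is an orthogonal projector, and the family $\{H_v\}\cup\{H_f\}$ is pairwise commuting, so $H_\Delta$ is a sum of commuting projectors; it is frustration-free because the $G$-orbit sum of the trivial flat coloring lies in $V(\Delta)\neq 0$, and its spectrum is contained in $\mathbb{Z}_{\geq 0}$ with gap bounded below by $1$, uniformly in $\Delta$, which gives TQO0 with $c=1$. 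For TQO3, $V(\Delta)=\prod_{v,f}H_vH_f(L)$ is by construction the image of the $M\times I$ partition operator, i.e. the DW vector space $V(M)$, whose dimension depends only on $(M,G)$ (and the cocycle) by the topological invariance of the state sum established in \cite{QW20,Wakui92}; hence the ground state degeneracy is celluation-independent.

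The serious content is TQO1 and TQO2, and I would prove them by transcribing the cut-and-cap argument of the Levin--Wen case. For TQO1, let $\Delta_s$ be a subcomplex contained in the interior of a subcomplex $\Delta_b$ homeomorphic to an $n$-disk, and let $h\in\End(L_s)$ act only on the edges of $\Delta_s$. Form the folded operator $h_g\in\End(V)$ as the composite that includes $V$ into $L_v=\prod_f H_f(L)$ and then into $L_s\otimes L_c$, applies $h\otimes\id$, projects back by $\prod_f H_f$ to $L_v$, and finally applies $\mathcal{H}_p$ to $V$. The analog of Lemma 1 --- the ground-space projector factors over the two pieces, with $L_{s,v}\otimes L_{\bar{c},v}=L_v\oplus W$ --- reduces $h_g=c\cdot\id$ to a statement about the DW modular functor. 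For that, as in Lemma 2: (i) using the retriangulation isomorphism $re$ of \cite{QW20}, which intertwines the partition operator with its image under edge removal, delete edges of the spherical collar $A$ around $\Delta_s$ until a single edge crosses $A$; (ii) cut $A$ along a core $S^{n-1}$ and cap the two sides with $n$-disks $D_\alpha,D_\beta$, so that $\Delta_s\cup D_\alpha$ becomes $S^n$ and $\Delta_{\bar{c},cut}\cup D_\beta$ becomes a closed manifold, and use the cut-and-cap isomorphism $G\colon\bigoplus_{Z\in Z(\Vec_G)}L_{cut,v,Z}\to L_{re,v}$ intertwining the corresponding partition operators; (iii) note that the disk axiom forces the new boundary label to be $\unit$, so only $Z=\unit$ contributes and $V_{cut,\unit}=V(S^n;\unit)\otimes V(\Delta_{\bar{c},cut}\cup D_\beta;\unit)$. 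Since $\dim V(S^n;\unit)=1$, the operator induced by $h$ on the first factor is a scalar $c$, the bottom row of the diagram is $c\cdot\id$, and transporting back through $G$ and $re$ gives $h_g=c\cdot\id$.

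For TQO2 I would run the same diagram with the partition operators replaced by the truncated products of local terms, i.e. prove the DW analogs of the figures in the Levin--Wen proof (now for the group-valued vertex operators) showing $re\circ H_{v,\mathrm{loc}}=H_{v,re,\mathrm{loc}}\circ re$ and $G\circ H_{v,cut,\mathrm{loc}}=H_{v,re,\mathrm{loc}}\circ G$. It then suffices to show that $h\otimes\id\circ j\circ i=0$ implies $h\otimes\id\circ j=0$, where $i$ is the inclusion of $V=V(M)$ and $j$ the inclusion of the local ground space $V_b=\prod_{v\in\Delta_s}H_v\prod_{f\in\Delta_b}H_f(L)$. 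After transporting to the cut picture, $\Delta_s\cup D_\alpha$ is a sphere with $\dim V(S^n;\unit)=1$, spanned by some $c$ whose restriction to $\Delta_s$ is $c_s$; for any $c'$ in the nonzero space $V(\Delta_{\bar{c},cut}\cup D_\beta;\unit)$ one has $j\circ re\circ G^{-1}(c\otimes c')=c_s\otimes c''$ with $c''$ depending on $c,c'$, so the hypothesis forces $h(c_s)=0$, whence $h\otimes\id\circ j(c_s\otimes c_{cut})=h(c_s)\otimes c_{cut}'=0$ for all $c_{cut}$, which is exactly $h\otimes\id\circ j=0$.

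The main obstacle is not analytic but structural: one must supply, within the \cite{QW20} formalism and in all space dimensions $n$, the two combinatorial isomorphisms used above --- the retriangulation isomorphism $re$ compatible with the DW ground-space projector, and the cut-and-cap isomorphism $G$ expressing the ground space after cutting along an embedded $S^{n-1}$ and capping as a direct sum over Drinfeld-double labels $Z\in Z(\Vec_G)$ --- together with their compatibility with the local vertex operators. This is where the choice of branching and ordering on the triangulation and, for a nontrivial cocycle $\omega$, the cocycle bookkeeping enter; once these are in place the reduction to $\dim V(S^n;\unit)=1$ is formal and proceeds exactly as in the Levin--Wen case.
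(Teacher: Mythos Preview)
Your proposal is coherent and would work, but it takes a genuinely different route from the paper. For DW the paper does \emph{not} rerun the Levin--Wen cut-and-cap argument; it explicitly says the proof follows \cite{Cui19} and proceeds by direct group-theoretic computation. For TQO1 (Proposition~2) it writes down an explicit basis $\{f_i\}$ of $V(\Delta)$ indexed by conjugacy classes of holonomy maps $\pi_1(M)\to G$, with $f_i=\sum_{j,k}e_{i,j,k}$, and computes the inner products $\langle V(h_s)(f_i),f_j\rangle$ by three elementary counting lemmas (existence of extensions, exact count $[G:C_G(\mathrm{Im}(i))]\,|G|^{|V|-|V'|}$, and invariance of the conjugacy class under modifying the coloring inside a ball). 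This yields an explicit scalar $c=|G|^{1-|V'|}\sum a^{r,s,t}_{a,b,c}$ over boundary-matching indices. For TQO2 (Proposition~3) it proves the stronger statement $h_b=c\cdot\id$ on $V_b$ directly, again by an explicit computation using that on an $n$-ball all flat colorings are gauge-equivalent, obtaining $c=|G|^{1-V_s}\sum_{k'}\sum_{\tilde k|_\partial=k'|_\partial}a^{\tilde k}_{k'}$.

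The trade-offs: the paper's approach is elementary and self-contained---no retriangulation isomorphism $re$, no cut-and-cap gluing $G$ indexed by $Z(\Vec_G)$, no compatibility diagrams with local vertex operators---and it produces the scalar explicitly. Your approach is more structural and would make the DW case a formal consequence of the same TQFT axioms used for Levin--Wen, but, exactly as you flag, it requires building those two isomorphisms in the \cite{QW20} formalism in arbitrary dimension and checking their intertwining with $H_v$; that is real work (branching, ordering, and cocycle bookkeeping for nontrivial $\omega$) which the paper sidesteps entirely by exploiting the concrete description of DW ground states as gauge classes of flat $G$-colorings.
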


\subsection{TQO1}
In this section we prove the DW TQFTs have the TQO1 and TQO2 properties. 

To show that DW theory has error correction code property TQO 1, We formulate the problem for DW TQO1 as follows.

For any subcomplex $\Delta_s$ of $\Delta$, we have vector space $\tilde{V}(\Delta_s)$ spanned by colorings $\sigma_{\Delta_s}$ on $\Delta_s$. Let $\Delta_{\bar{c},s}$ be the subcomplex by removing the interior of $\Delta_s$ and $\tilde{V}(\Delta_{\bar{c},s})$ the corresponding vector space. Then we have a canonical decomposition $\tilde{V}(\Delta_s)\otimes\tilde{V}(\Delta_{\bar{c},s})=\tilde{V}(\Delta)\oplus\mathbb{C}\{\sigma_{\Delta_s}\otimes\sigma_{\Delta_{\bar{c},s}}|\sigma_{\Delta_s}\neq\sigma_{\Delta_{\bar{c},s}}\text{ on boundary}\}$. For any linear map $h_s$ on $\tilde{V}(\Delta_s)$, we have a corresponding map $V(h_s)$ on $V(\Delta)$ by 
$$V(\Delta)\stackrel{i}{\longrightarrow}\tilde{V}(\Delta)\stackrel{j}{\longrightarrow}\tilde{V}(\Delta_s)\otimes\tilde{V}(\Delta_{\bar{c},s})\stackrel{h_s\otimes id}{\longrightarrow}\tilde{V}(\Delta_s)\otimes\tilde{V}(\Delta_{\bar{c},s})\stackrel{p}{\longrightarrow}\tilde{V}(\Delta)\stackrel{M\times I}{\longrightarrow}V(\Delta)
$$
where $i,j$ are natural inclusions and $p$ is natural projection.

We have the following proposition.
\begin{mypro}
When $\Delta_s$ is homeomorphic to a $n$-ball in $M$, for any $h_s\in End(\tilde{V}(\Delta_s))$, $V(h_s)$ is a multiple of identity.
\end{mypro}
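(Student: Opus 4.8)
The plan is to reduce the statement to a purely TQFT fact about the $n$-sphere, mimicking the Levin--Wen TQO1 argument but exploiting that here $\partial\Delta_s\cong S^{n-1}$ is already a subcomplex of $\Delta$. First I would cut $M$ along $\partial\Delta_s$. Since $\partial\Delta_s$ is a full subcomplex, no simplex of $\Delta$ straddles the cut, so --- unlike in the Levin--Wen case, where an auxiliary edge-removal step was needed to tame the cell structure around the region --- the cut separates $\Delta$ cleanly into $\Delta_s$ and $\Delta_{\bar c,s}$. I then cap off: glue a triangulated $n$-ball $D_\alpha$ onto $\Delta_s$ along $\partial\Delta_s$ to obtain a triangulated $n$-sphere $\Delta_s\cup D_\alpha$, and similarly glue an $n$-ball $D_\beta$ onto $\Delta_{\bar c,s}$ to obtain a closed $n$-manifold $\Delta_{\bar c,s}\cup D_\beta$.

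Next I would set up a commuting diagram of three rows, exactly parallel to the one used for the Levin--Wen model. The top row is the definition of $V(h_s)$. The lower rows are obtained by transporting along the cut-and-cap isomorphisms furnished by the state-sum construction of \cite{QW20,Wakui92}: the gluing axiom gives a decomposition $\tilde V(\Delta)\cong\bigoplus_{Z}\tilde V(\Delta_s\cup D_\alpha;Z)\otimes\tilde V(\Delta_{\bar c,s}\cup D_\beta;Z^{*})$, where $Z$ ranges over the labels the theory attaches to the cutting sphere $\partial\Delta_s$, and passing to ground subspaces and using the disk axiom for the caps $D_\alpha,D_\beta$ leaves only the term $Z=\unit$, so $V(\Delta)\cong V(\Delta_s\cup D_\alpha)\otimes V(\Delta_{\bar c,s}\cup D_\beta)$. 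All vertical maps are isomorphisms. Two things need to be checked: (i) the cut-and-cap isomorphisms intertwine the DW partition-function idempotent $M\times I$ with its analogues on the cut manifolds, which is topological invariance of the DW state sum \cite{QW20}; and (ii) $h_s\otimes\id$ is carried to $h_s\otimes\id$, which holds because $h_s$ acts only on $\tilde V(\Delta_s)$, whereas the surgery modifies only the complementary tensor factor and attaches the inert ball $D_\alpha$.

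Granting the diagram, the composite along the bottom row computes $V(h_s)$, up to conjugation by the vertical isomorphisms, as $h_{S^n}\otimes\id\in\End\bigl(V(\Delta_s\cup D_\alpha)\otimes V(\Delta_{\bar c,s}\cup D_\beta)\bigr)$, where $h_{S^n}$ is the operator that $h_s$ induces on the closed-sphere factor. Since $\Delta_s\cup D_\alpha\cong S^n$, the disk axiom forces $V(\Delta_s\cup D_\alpha)=V(\Delta_s;\unit)\cong V(D^n;\unit)\cong\mathbb{C}$, so $h_{S^n}$ is multiplication by a scalar $c\in\mathbb{C}$ and $h_{S^n}\otimes\id=c\cdot\id$. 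Transporting back up the diagram through the vertical isomorphisms gives $V(h_s)=c\cdot\id$, as claimed.

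The main obstacle is item (i) above: proving the combinatorial lemma that the explicit DW state-sum operators commute with the cut-and-cap isomorphisms --- this is the DW analogue of Lemma~1 in the Levin--Wen section and of the intertwining relations $re\cdot\mathcal H_p=\mathcal H_{p,re}\cdot re$ and $G\cdot\mathcal H_{p,cut,Z}=\mathcal H_{p,re}\cdot G$ used there --- together with the identification that only the trivial label $Z=\unit$ survives on the ground subspace after capping. A secondary bookkeeping task is to carry out the cut-and-cap operation already at the level of the coloring spaces $\tilde V$: choosing compatible triangulations of $D_\alpha$ and $D_\beta$, and verifying that the decomposition $\tilde V(\Delta_s)\otimes\tilde V(\Delta_{\bar c,s})=\tilde V(\Delta)\oplus\mathbb{C}\{\sigma_{\Delta_s}\otimes\sigma_{\Delta_{\bar c,s}}\mid\sigma_{\Delta_s}\neq\sigma_{\Delta_{\bar c,s}}\text{ on boundary}\}$ interacts with the surgery as required; because $\partial\Delta_s$ is already a subcomplex, this is strictly simpler here than in the Levin--Wen setting.
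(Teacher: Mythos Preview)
Your approach is correct but genuinely different from the paper's own proof of this proposition. The paper explicitly says its DW argument ``follows \cite{Cui19}'': it fixes a base point in $\Delta_s$, writes down an explicit basis $\{e_{i,j,k}\}$ of $\tilde V(\Delta)$ indexed by conjugacy classes $i$ of holonomy maps $\pi_1\to G$, representatives $j$, and colorings $k$; the ground-state basis is $f_i=\sum_{j,k}e_{i,j,k}$. It then computes $\langle V(h_s)(f_i),f_j\rangle$ directly via three combinatorial lemmas (existence of a coloring with prescribed restriction, an exact count $[G:C_G(\mathrm{Im}(i))]\,|G|^{|V|-|V'|}$ of such colorings, and preservation of the conjugacy class $i'=i$ under the splice), obtaining the explicit scalar $c=\sum_{a,b,c}\sum_{e'_{r,s,t}|_{\partial\Delta_s}=e'_{a,b,c}|_{\partial\Delta_s}}a_{a,b,c}^{r,s,t}\,|G|^{1-|V'|}$.

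What you do instead is transplant the paper's Levin--Wen TQO1 strategy (cut along $\partial\Delta_s$, cap with $D_\alpha,D_\beta$, invoke the gluing formula and the disk axiom to reduce to $\dim V(S^n)=1$) to the DW setting. This is more conceptual and aligns with the paper's stated philosophy that TQO1 is a lattice disk axiom; it also avoids the counting lemmas entirely. The price is that you must supply the DW analogue of the Kirillov--Balsam gluing isomorphism $G$ and its intertwining with $\mathcal H_p$---you correctly flag this as the main obstacle, and for untwisted DW it is indeed straightforward (the label set on $\partial\Delta_s\cong S^{n-1}$ is trivial for $n\geq 2$, so only $Z=\unit$ appears without any further work). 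The paper's route, by contrast, is self-contained and yields the scalar explicitly, which is useful if one later wants quantitative control. Your observation that no edge-removal step $re$ is needed here is also correct: in the DW model the degrees of freedom live on edges and $\partial\Delta_s$ is already a full subcomplex, so the cut is clean.
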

\begin{proof}
Choose a vertex of $\Delta_s$ as the base point. We give a basis $\{e_{i,j,k}\}$ for $\tilde{V}(\Delta)$ where $i$ is for the conjugate classes of holonomy maps, $j$ is for the holonomy maps in $i$, $k$ is for the colorings in $j$. We endow the standard inner product. According to \cite{QW20}, we give a basis $\{f_i\}$ for $V(\Delta)$ by $f_i=\sum_{j,k}e_{i,j,k}$. Similarly, we give a basis $\{e^{\prime}_{a,b,c}\}$ for $\tilde{V}(\Delta_s)$.
For any $h_s\in End(\tilde{V}(\Delta_s))$, we have matrix $a_{a,b,c}^{r,s,t}$ by $h_s(e_{a,b,c}^{\prime})=\sum_{r,s,t} a_{a,b,c}^{r,s,t}e^{\prime}_{r,s,t}$. For any $i$, we have
\begin{align*}
V(h_s)(f_i)&=V(h_s)(\sum_{j,k}e_{i,j,k})=\sum_{a,b,c}V(h_s)(\sum_{e_{i,j,k}|_{\Delta_s}=e^{\prime}_{a,b,c}}e_{i,j,k})\\
&=\sum_{a,b,c}M\times I\cdot p(\sum_{r,s,t}a^{r,s,t}_{a,b,c}e^{\prime}_{r,s,t}\otimes\sum_{e_{i,j,k}|_{\Delta_s}=e^{\prime}_{a,b,c}}e_{i,j,k}|_{{\Delta_{\bar{c},s}}})\\
&=\sum_{a,b,c}M\times I(\sum_{e^{\prime}_{r,s,t}|_{\partial\Delta_s}=e^{\prime}_{a,b,c}|_{\partial\Delta_s}}a_{a,b,c}^{r,s,t}e^{\prime}_{r,s,t}\otimes\sum_{e_{i,j,k}|_{\Delta_s}=e^{\prime}_{a,b,c}}e_{i,j,k}|_{\Delta_{\bar{c},s}})\\
\end{align*}
Since $e^{\prime}_{r,s,t}|_{\partial\Delta_s}=e_{i,j,k}|_{\partial\Delta_s}=e^{\prime}_{a,b,c}|_{\partial\Delta_S}$, $e^{\prime}_{r,s,t}\otimes e_{i,j,k}|_{\Delta_{\bar{c},s}}$ corresponds to $e_{i^{\prime},j^{\prime},k^{\prime}}$ on $\Delta$. 

To compute the coefficient for $f_i$, we compute the inner product $<V(h_s)(f_i),f_j>$. According to Lemma 5., $<V(h_s)(f_i),f_j>=0$ for any $i\neq j$. According to Lemma 4., Lemma 5., 
\begin{align*}
<V(h_s)(f_i),f_i>&=<M\times I(\sum_{a,b,c}\sum_{e^{\prime}_{r,s,t}|_{\partial\Delta_s}=e^{\prime}_{a,b,c}|_{\partial\Delta_s}}\sum_{e_{i,j,k}|_{\Delta_s}=e^{\prime}_{a,b,c}}a_{a,b,c}^{r,s,t}e_{i,j^{\prime},k^{\prime}}),f_i>
\\
&=\sum_{a,b,c}\sum_{e^{\prime}_{r,s,t}|_{\partial\Delta_s}=e^{\prime}_{a,b,c}|_{\partial\Delta_s}}a_{a,b,c}^{r,s,t}[G:C_{G}(Im(i))]|G|^{|V|-|V^{\prime}|}
\\
&\frac{1}{[G:C_G(Im(i))]|G|^{|V|-1}}<f_i,f_i>
\\
&=\sum_{a,b,c}\sum_{e^{\prime}_{r,s,t}|_{\partial\Delta_s}=e^{\prime}_{a,b,c}|_{\partial\Delta_s}}a_{a,b,c}^{r,s,t}|G|^{1-|V^{\prime}|}<f_i,f_i>
\end{align*}
We have
$$V(h_s)=(\sum_{a,b,c}\sum_{e^{\prime}_{r,s,t}|_{\partial\Delta_s}=e^{\prime}_{a,b,c}|_{\partial\Delta_s}}a_{a,b,c}^{r,s,t}|G|^{1-|V^{\prime}|})id
$$
\end{proof}
\begin{mylem}
For any $e^{\prime}_{a,b,c}$ and $(i,j)$, there exists $e_{i,j,k}$ such that $e_{i,j,k}|_{\Delta_s}=e^{\prime}_{a,b,c}$
\end{mylem}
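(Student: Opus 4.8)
\emph{Proof proposal.} The plan is to build the extension by hand, using a pair of compatible spanning trees and exploiting that $\Delta_s$ — which is homeomorphic to an $n$-ball in the situation where this lemma is applied, hence simply connected — imposes no constraint on the holonomy. Fix once and for all the base vertex $v_0\in\Delta_s$ used to define holonomy maps, and recall the standard dictionary: a flat $G$-coloring $\sigma$ of a connected complex, together with $v_0$ and a spanning tree, is equivalent to a choice of group elements on the non-tree edges, and its holonomy is the homomorphism $\pi_1\to G$ that sends the fundamental loop of a non-tree edge $e\colon u\to w$ to $h_u\,\sigma(e)\,h_w^{-1}$, where $h_x$ is the product of $\sigma$-values along the tree path $v_0\to x$.

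First I would choose a spanning tree $T_s$ of $\Delta_s$ and extend it to a spanning tree $T$ of $\Delta$ with $T\cap\Delta_s=T_s$; since $T_s$ already spans $\Delta_s$, the $T$-path between any two vertices of $\Delta_s$ stays inside $\Delta_s$. Then I define a coloring $\sigma$ of $\Delta$ as follows: on every edge contained in $\Delta_s$ take the value prescribed by $e'_{a,b,c}$; on every tree edge in $T\setminus T_s$ take the value $1\in G$; form the vertex products $h_x$ from these tree values (which, for $x\in\Delta_s$, agree with the products computed inside $\Delta_s$ from $e'_{a,b,c}$); and for each remaining non-tree edge $e\colon u\to w$ not in $\Delta_s$, with fundamental loop class $[\gamma_e]\in\pi_1(\Delta,v_0)$, set $\sigma(e):=h_u^{-1}\,j([\gamma_e])\,h_w$. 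One then checks three things. (i) $\sigma$ restricts to $e'_{a,b,c}$ on $\Delta_s$: tree edges carry $e'$-values by construction, and for a non-tree edge $e$ of $\Delta_s$ the loop $\gamma_e$ lies in $\Delta_s$, so $[\gamma_e]$ lies in the image of $\pi_1(\Delta_s,v_0)\to\pi_1(\Delta,v_0)$, which is trivial because $\Delta_s$ is a ball; hence $\sigma(e)=h_u^{-1}h_w$, the $e'_{a,b,c}$-value of $e$. (ii) $\sigma$ is flat: on a triangle inside $\Delta_s$ this is the flatness of $e'_{a,b,c}$, and on any other triangle the cyclic product of the three edge values is the image under $j$ of the corresponding $2$-cell relator in $\pi_1(\Delta,v_0)$, hence trivial because $j$ is a homomorphism. (iii) By construction the holonomy of $\sigma$ based at $v_0$ is $j$, so $\sigma$ is one of the basis colorings $e_{i,j,k}$.

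The only step that needs genuine care rather than bookkeeping is the flatness verification (ii) for triangles meeting the complement of $\Delta_s$: it must be phrased as the usual correspondence "flat $G$-coloring $\leftrightarrow$ homomorphism $\pi_1\to G$'' run in this relative setting, observing that evaluating $j$ on the $2$-cell relators forces exactly the triangle identities, and that this places no condition on the part of $\sigma$ inside $\Delta_s$ precisely because $\pi_1(\Delta_s)$ is trivial. No global cohomological obstruction appears, so once this correspondence is set up cleanly the rest is routine.
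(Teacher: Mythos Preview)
Your argument is correct, and it takes a genuinely different route from the paper's proof. The paper argues as follows: pick any flat coloring $e_{i,j,k}$ on $\Delta$ with holonomy $j$ and restrict it to $\Delta_s$; since $\Delta_s$ is a ball, the restriction has trivial holonomy, so its indices $(d,e)$ automatically coincide with $(a,b)$. Then, because gauge transformations at the non-base vertices of $\Delta_s$ act transitively on colorings with fixed (trivial) holonomy, one can find a vertex-coloring that converts the restriction $e'_{a,b,f}$ into the target $e'_{a,b,c}$, and applying that same gauge transformation to $e_{i,j,k}$ produces the desired $e_{i,j,k'}$.

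By contrast, you build the global coloring from scratch via a pair of nested spanning trees $T_s\subset T$ with $T\cap\Delta_s=T_s$, prescribing $e'_{a,b,c}$ on $\Delta_s$, the identity on $T\setminus T_s$, and the holonomy formula $\sigma(e)=h_u^{-1}j([\gamma_e])h_w$ on the remaining edges. Your consistency check (i) is exactly what is needed to merge the two prescriptions, and the flatness and holonomy checks (ii)--(iii) are the standard dictionary ``flat $G$-coloring $\leftrightarrow$ homomorphism $\pi_1\to G$'' run carefully in this relative setting. The advantage of your approach is that it is fully explicit and does not presuppose the existence of some $e_{i,j,k}$ with holonomy $j$; the paper's gauge-transformation argument is shorter and dovetails more directly with the counting in the subsequent Lemmas~4 and~5, which are phrased in the same language of vertex-colorings acting on the $k$-index.
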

\begin{proof}
Choose any $e_{i,j,k}$ for $(i,j)$. Let $e^{\prime}_{d,e,f}=e_{i,j,k}|_{\Delta_s}$. Since $\Delta_s$ is a $n$-ball with trivial fundamental group, we have $a=d,b=e$. Then we can use the method in \cite{QW20} to convert $e^{\prime}_{a,b,f}$ to $e^{\prime}_{a,b,c}$ by coloring the vertices of $\Delta_s$ except the base point. Correspondingly we get $e_{i,j,k^{\prime}}$ we need.
\end{proof}
\begin{mylem}
For any $e^{\prime}_{a,b,c}$ and $i$, there exist exactly $[G:C_{G}(Im(i))]|G|^{|V|-|V^{\prime}|}$ $e_{i,j,k}$ such that $e_{i,j,k}|_{\Delta_s}=e^{\prime}_{a,b,c}$ where $C_G$ is the centralizer and $|V|,|V^{\prime}|$ are the numbers of vertices in $\Delta,\Delta_s$.
\end{mylem}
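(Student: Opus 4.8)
The plan is to reduce the statement to a direct count of flat $G$-colorings and then apply the orbit--stabilizer theorem. Recall that, with the base vertex $*$ chosen in $\Delta_s$, a coloring $\sigma$ on the connected complex $\Delta$ is an assignment of an element of $G$ to each oriented edge obeying the triangle (flatness) conditions, and it carries a well-defined holonomy homomorphism $\rho_\sigma\colon \pi_1(\Delta,*)\to G$ whose conjugacy class is the index $i$ in the basis $\{e_{i,j,k}\}$ of \cite{QW20}. The colorings $e_{i,j,k}$ with holonomy class $i$ break up, as $j$ runs over the homomorphisms in that class, into those with holonomy \emph{exactly} $\rho_j$; so it suffices to prove (a) the number of homomorphisms $\pi_1(\Delta,*)\to G$ conjugate to a fixed $\rho$ equals $[G:C_G(\operatorname{Im}\rho)]$, and (b) for each such homomorphism $\rho'$ the number of colorings $\sigma$ on $\Delta$ with $\sigma|_{\Delta_s}=e'_{a,b,c}$ and $\rho_\sigma=\rho'$ equals $|G|^{|V|-|V'|}$, independently of $\rho'$. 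Summing (b) over the $[G:C_G(\operatorname{Im}\rho)]$ homomorphisms of (a) gives the claimed total.

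Step (a) is orbit--stabilizer for the conjugation action of $G$ on $\operatorname{Hom}(\pi_1(\Delta,*),G)$: the stabilizer of $\rho$ is $\{g\in G:\ g\rho(x)g^{-1}=\rho(x)\text{ for all }x\}=C_G(\operatorname{Im}\rho)$, so its orbit has size $[G:C_G(\operatorname{Im}\rho)]$. For step (b) I would pick a spanning tree $T'$ of $\Delta_s$ and extend it to a spanning tree $T$ of $\Delta$; since $T'$ already connects every vertex of $\Delta_s$ and $T$ is a tree, $T\cap \Delta_s^{(1)}=T'$, so $T$ has exactly $|V|-|V'|$ edges outside $\Delta_s$. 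A flat coloring on $\Delta$ is equivalent to the pair: its values on the edges of $T$ (free), together with its holonomy $\rho'\in\operatorname{Hom}(\pi_1(\Delta,*),G)$ --- presenting $\pi_1(\Delta,*)$ with generators the non-tree loops and relations the $2$-cell boundaries, the value of $\sigma$ on each non-tree edge is forced by $\rho'$ and the tree values, and conversely any such pair yields a flat coloring precisely because $\rho'$ respects the $2$-cell relations. Imposing $\sigma|_{\Delta_s}=e'_{a,b,c}$ fixes the values on $T'$; and because $\pi_1(\Delta_s)=1$ forces the holonomy of $e'_{a,b,c}$ to be trivial, $e'_{a,b,c}$ is determined by its $T'$-values alone and is compatible with every choice of $\rho'$. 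Hence the only remaining freedom is the values on the $|V|-|V'|$ edges of $T\setminus T'$, giving the count $|G|^{|V|-|V'|}$ in (b).

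The main obstacle is the bookkeeping in step (b): one must make the correspondence ``flat coloring $\leftrightarrow$ (tree values, holonomy)'' precise with a fixed orientation and tree-path convention, and in particular verify that for a non-tree edge $e$ lying \emph{inside} $\Delta_s$ the value prescribed by $(T,\rho')$ coincides with the value prescribed by $e'_{a,b,c}$ itself. This is exactly where $\pi_1(\Delta_s)=1$ is used, together with $T\cap\Delta_s^{(1)}=T'$ (so that the $T$-path between two vertices of $\Delta_s$ stays inside $\Delta_s$): the loop $\gamma_e$ then lies in $\Delta_s$, hence is null-homotopic in $\Delta$, so $\rho'([\gamma_e])=1$ and the prescribed value reduces to the pure tree-path product, which agrees with what the flat, trivial-holonomy coloring $e'_{a,b,c}$ already assigns. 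The surjectivity half --- that at least one extension with holonomy $\rho'$ exists --- is supplied by the previous lemma, so only the exact count is new; a sanity check on small cases (e.g.\ $\Delta_s=\Delta$, or $\Delta\simeq S^1$) confirms the formula and the global consistency $\sum_i[G:C_G(\operatorname{Im} i)]\,|G|^{|V|-|V'|}=|\operatorname{Hom}(\pi_1(\Delta),G)|\,|G|^{|V|-|V'|}$.
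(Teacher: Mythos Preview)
Your proposal is correct and follows the same two-step decomposition as the paper: first count the homomorphisms $j$ in the conjugacy class $i$ (your step (a), the paper's final sentence), then for each such $j$ count the colorings restricting to $e'_{a,b,c}$ (your step (b), the paper's middle sentence).  The only substantive difference is the mechanism in step (b): the paper observes that the gauge group $G^{V\setminus V'}$ acting at the vertices outside $\Delta_s$ is simply transitive on this set (``all the $k$'s can be obtained by coloring the vertices of $\Delta$ except the ones in $\Delta_s$''), while you parametrize directly via an extension $T'\subset T$ of spanning trees with $T\cap\Delta_s^{(1)}=T'$ and count the free edge values on $T\setminus T'$.  These are dual viewpoints --- a spanning tree is a section of the gauge quotient --- so the arguments are interchangeable; your version makes explicit why non-tree edges of $\Delta_s$ receive the correct value (the $\pi_1(\Delta_s)=1$ step), which the paper's terse formulation leaves buried in its appeal to Lemma~3.
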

\begin{proof}
According to Lemma 3, for $e^{\prime}_{a,b,c}$ and $i$, there exists at least one $e_{i,j,k}$ such that $e_{i,j,k}|_{\Delta_s}=e^{\prime}_{a,b,c}$. For one fixed $j$ for $i$, all the $k$'s can be obtained by coloring the vertices of $\Delta$ except the ones in $\Delta_s$. Thus there are exactly $|G|^{|V|-|V^{\prime}|}$ $e_{i,j,k}$ for $(i,j)$. There are exactly $[G:C_G(Im(i))]$ $j$ for $i$. We finish.
\end{proof}

\begin{mylem}
$i^{\prime}=i$.
\end{mylem}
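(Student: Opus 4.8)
I would begin by unpacking what $i'=i$ means. The index $i$ attached to a block of the basis $\{e_{i,j,k}\}$ of $\tilde V(\Delta)$ is, by construction, the conjugacy class of the holonomy homomorphism $\pi_1(M)\to G$ determined by the associated flat coloring of $\Delta$. So, writing $\sigma:=e_{i,j,k}$ and $\sigma':=e'_{r,s,t}\otimes e_{i,j,k}|_{\Delta_{\bar{c},s}}$ for the two colorings at hand, the claim is exactly that $\sigma$ and $\sigma'$ have conjugate holonomy. The first point to record is that $\sigma$ and $\sigma'$ agree on all of $\Delta_{\bar{c},s}$, and in particular on $\partial\Delta_s$: in the sum defining $V(h_s)(f_i)$ one has $e_{i,j,k}|_{\Delta_s}=e'_{a,b,c}$ and $e'_{r,s,t}|_{\partial\Delta_s}=e'_{a,b,c}|_{\partial\Delta_s}$, so $\sigma$ and $\sigma'$ differ only on edges interior to $\Delta_s$.

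Next I would compute the holonomy from a base vertex $v_1\in\partial\Delta_s$ rather than from the base vertex chosen inside $\Delta_s$; this is legitimate because the conjugacy class of the holonomy representation does not depend on the base vertex (changing it conjugates the representation by the holonomy of a connecting path). Since $\Delta=\Delta_s\cup\Delta_{\bar{c},s}$ with $\Delta_s$ a ball and $\partial\Delta_s\cong S^{n-1}$ path-connected (so $n\ge 2$), van Kampen's theorem shows that the inclusion induces a surjection $\pi_1(\Delta_{\bar{c},s},v_1)\to\pi_1(M,v_1)$: every homotopy class of loops at $v_1$ has a representative lying entirely in $\Delta_{\bar{c},s}$. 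The holonomy of such a loop is the product of the edge colors it traverses, all of which lie in $\Delta_{\bar{c},s}$, where $\sigma$ and $\sigma'$ coincide. Hence the holonomy homomorphisms $\pi_1(M,v_1)\to G$ coming from $\sigma$ and from $\sigma'$ are literally equal; moving back to the original base vertex conjugates both by (possibly distinct) elements of $G$, so they stay conjugate, i.e.\ $i'=i$.

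I do not expect a serious obstacle here: the lemma is essentially the geometric fact that altering a flat connection inside a codimension-$0$ ball, keeping it fixed on the boundary sphere, cannot change the holonomy conjugacy class. The one non-formal ingredient is the surjectivity $\pi_1(\Delta_{\bar{c},s})\to\pi_1(M)$, which is immediate from van Kampen once one notes $\pi_1(D^n)=1$ and that $\partial\Delta_s$ is connected (for $n\ge 3$ this map is even an isomorphism, for $n=2$ only surjective, which is all we need). If one prefers to keep the base vertex $v_0\in\Delta_s$, there is an alternative: decompose an arbitrary loop at $v_0$ as an alternating concatenation of arcs in $\Delta_s$ and arcs in $\Delta_{\bar{c},s}$; the $\Delta_{\bar{c},s}$-arcs contribute identically for $\sigma$ and $\sigma'$, while simple-connectivity of $\Delta_s$ makes the holonomy of each $\Delta_s$-arc depend only on its endpoints, and the equality $\sigma|_{\partial\Delta_s}=\sigma'|_{\partial\Delta_s}$ (with $\partial\Delta_s$ connected) forces these endpoint holonomies for $\sigma$ and $\sigma'$ to differ by conjugation by a single fixed group element $c$, which then telescopes through the concatenation to give $\rho_{\sigma'}=c\,\rho_\sigma\,c^{-1}$. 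Either route yields the lemma, which supplies the diagonal terms in the computation of $\langle V(h_s)(f_i),f_j\rangle$ in the proof of the Proposition.
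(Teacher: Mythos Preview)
Your argument is correct, but it follows a genuinely different route from the paper's. The paper proves $i'=i$ by constructing an explicit gauge transformation: for each vertex $v'$ in the interior of $\Delta_s$ it chooses an edge path in $\Delta_s$ from $v'$ to a boundary vertex, reads off the unique $g_{v'}\in G$ that makes the path holonomies of $e_{i,j,k}$ and $e_{i',j',k'}$ agree (with $1$ assigned to boundary vertices), and then uses simple-connectivity of $\Delta_s$ to check that this vertex coloring converts \emph{every} edge label of $e_{i,j,k}$ into the corresponding one of $e_{i',j',k'}$. Since the two colorings are thus related by a vertex re-coloring, they lie in the same conjugacy class of holonomy, i.e.\ $i'=i$.

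Your approach sidesteps this construction entirely: you move the base point to $\partial\Delta_s$, invoke van Kampen to get surjectivity of $\pi_1(\Delta_{\bar c,s})\to\pi_1(M)$, and conclude that the two holonomy homomorphisms already coincide at that base point because the colorings agree on $\Delta_{\bar c,s}$. This is cleaner and more conceptual; the only mild cost is the appeal to van Kampen and the need to note $n\ge 2$ so that $\partial\Delta_s$ is connected. The paper's hands-on argument, by contrast, actually establishes the slightly stronger fact that the two colorings differ by a gauge transformation supported in the interior of $\Delta_s$, which is in the same spirit as the counting done in the surrounding lemmas. Your ``alternative'' paragraph, decomposing a loop into arcs in $\Delta_s$ and $\Delta_{\bar c,s}$ and using simple-connectivity of $\Delta_s$ for the interior arcs, is in fact quite close to how the paper reasons.
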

\begin{proof}
It suffices to define a coloring on the vertices in the interior of $\Delta_s$, such that we can convert $e_{i,j,k}$ to $e_{i^{\prime},j^{\prime},k^{\prime}}$ by this coloring.

For each vertex $v^{\prime}$ in the interior of $\Delta_s$, choose a path formed by edges in $\Delta_s$ connecting $v^{\prime}$ to one vertex $\partial v$ on $\partial\Delta_s$. Suppose that we always use $1\in G$ to color the vertices on $\partial\Delta_s$. Then there exists a unique $g\in G$ color $v^{\prime}$ such that all the colorings of $e_{i,j,k}$ on the edges of the path can be converted to the corresponding ones of $e_{i^{\prime},j^{\prime},k^{\prime}}$.

Next let us show $e_{i,j,k}$ is changed to $e_{i^{\prime},j^{\prime},k^{\prime}}$ by the above coloring. Since the coloring does not change the colorings on the edges on $\Delta_{\bar{c},s}$, we just consider the ones inside $\Delta_s$, i.e. at least one of their end points in the interior of $\Delta_s$. For the edges on the paths chosen, it is trivial. Let $e$ be a edge which is not on the paths and with two end points $v_1^{\prime},v_2^{\prime}$. There exist two paths constructed above, $A_1,A_2$, connecting them to $\partial v_1,\partial v_2$ on $\partial\Delta_s$. There exists one path $A_3$ on $\partial\Delta_s$ connecting $\partial v_1,\partial v_2$. Now we use the same notation for the paths and elements coloring them and $g_{v}$ to denote the element coloring $v$. Since $\Delta_s$ is a $n$-ball, we have $A_3A_2^{-1}e_{i,j,k}(e)A_1=A_3A_2^{-1}g_{v_2^{\prime}}e_{i^{\prime},j^{\prime},k^{\prime}}(e)g_{v_1^{\prime}}^{-1}A_1$. We have $e_{i^{\prime},j^{\prime},k^{\prime}}(e)=g_{v_2^{\prime}}^{-1}e_{i,j,k}(e)g_{v_1^{\prime}}$
\end{proof}

\subsection{TQO2}
Now we show DW-theory has TQO 2, and formulate DW TQO2 as follows.

Let $\Delta_s$ be a subcomplex of $\Delta$ formed by $n$-cells and $\Delta_b$ be a subcomplex containing $\Delta_s$ as interior. Define $V_b=\prod_{v\in\mathring{\Delta}_b,f\in\Delta_b}H_vH_f(L)$, $L_s=\bigotimes_{e\in\Delta_s}\mathbb{C}^G$ and $L_{c,s}=\bigotimes_{e\notin\Delta_s}\mathbb{C}^G$ which implies that $L=L_s\otimes L_{c,s}$. For any $h_s\in End(L_s)$, we have an extension $h_b\in End(V_b)$ by
$$V_b\stackrel{i_b}{\longrightarrow}L_s\otimes L_c\stackrel{h_s\otimes id}{\longrightarrow}L_s\otimes L_c\stackrel{\prod_{v\in\mathring{\Delta}_b,f\in\Delta_b}H_vH_f}{\longrightarrow}V_b
$$
where $i_b$ is the inclusion for $\prod_{v,f}H_vH_f$.
\begin{mypro}
When $\Delta_b$ is homeomorphic to a $n$-ball, $h_b=c\cdot id$ for some $c\in\mathbb{C}$.
\end{mypro}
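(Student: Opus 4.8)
The plan is to run the argument of the preceding TQO1 Proposition with the ball $\Delta_b$ in the role that the whole complex played there and the shell $\Delta_b\smallsetminus\mathring{\Delta}_s$ in the role of a collar. Since $\Delta_s$ is a ball sitting in the interior of the ball $\Delta_b$, its boundary $\partial\Delta_s$ is an embedded $(n-1)$-sphere separating $\Delta_s$ from the rest of $\Delta$; this separation property, a consequence of the hypothesis, is what will let the disk axiom force an operator supported inside $\Delta_s$ to act on $V_b$ as a scalar.

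First I would fix a base vertex $*$ in the interior of $\Delta_s$, as in the TQO1 proof, and record a basis of $V_b$. Because $\Delta_b$ is simply connected, every colouring flat on the faces of $\Delta_b$ is the coboundary of a potential $\phi\colon V(\Delta_b)\to G$ normalised by $\phi(*)=1$, and quotienting by the gauge action at the interior vertices of $\Delta_b$ leaves exactly the restriction $\rho=\phi|_{\partial V(\Delta_b)}$; adjoining an arbitrary colouring $\tau$ of the edges lying outside $\Delta_b$ gives a basis $\{g_\rho\otimes\tau\}$ of $V_b$, with $g_\rho=\sum_{\phi|_{\partial}=\rho}|\sigma_\phi\rangle$. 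Each $g_\rho$, viewed inside the Hilbert space $L_s\otimes L_{\mathrm{shell}}$ on the edges of $\Delta_b$, is entangled between the $\Delta_s$-edges and the shell edges, but the point — the lattice form of $V(D^n)\cong\mathbb{C}$ — is that every colouring of the edges of $\Delta_s$ that occurs in some $g_\rho$ is flat with trivial boundary holonomy, hence is carried to one fixed flat colouring by a gauge transformation supported inside the ball $\Delta_s$, independently of $\rho$ and $\tau$.

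I would then compute the matrix of $h_b$ in this basis exactly as for TQO1: expand $\prod_{v\in\mathring{\Delta}_b,f\in\Delta_b}H_vH_f\,(h_s\otimes\id)(g_\rho\otimes\tau)$ over the basis of $L_s$ and form the inner products $\langle h_b(g_\rho\otimes\tau),g_{\rho'}\otimes\tau'\rangle$. The off-diagonal terms vanish: colourings disagreeing on $\partial\Delta_s$ span a subspace orthogonal to $V_b$, and recolouring only the edges of $\Delta_s$ is implemented by a gauge transformation supported in $\Delta_s$ and so changes no holonomy, whence the exterior data $\rho,\tau$ are preserved — the analogues of Lemma 5 and of the mismatch decomposition used for TQO1. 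The diagonal terms are all equal to one and the same scalar, since the restriction map from $V_b$ to the space of flat colourings of $\Delta_s$ is onto with a fixed number of preimages independent of $\rho,\tau$ — the analogues of Lemmas 3 and 4, the multiplicity being $|G|$ to a fixed power because here the relevant holonomy image is trivial and its centralizer is all of $G$. Hence $h_b=c\cdot\id$. Equivalently, and more transparently, one can cut $\Delta_b$ along $\partial\Delta_s$ — after a preliminary edge-reduction of the shell by an isomorphism $re$, as in the Levin--Wen TQO2 argument — and cap the two sides by disks $D_\alpha,D_\beta$, so that $\Delta_s\cup D_\alpha\cong S^n$; the cut-and-cap map $G$ is then an isomorphism intertwining the projector $\prod_{v\in\mathring{\Delta}_b,f\in\Delta_b}H_vH_f$ with its downstairs counterpart (with $\bigoplus_Z V_{cut,Z}=V_{cut,\unit}$ as in \cite{KB10}) and commuting with $h_s\otimes\id$, so that $h_b$ becomes $h_{S^n}\otimes\id$ on $V(\Delta_s\cup D_\alpha)\otimes V((\Delta\smallsetminus\mathring{\Delta}_s)\cup D_\beta)$; since $\dim V(S^n)=1$ by the disk axiom, $h_{S^n}$ is a scalar and $h_b=c\cdot\id$.

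The step I expect to be the main obstacle is the same one as in the Levin--Wen case: verifying that the cut-and-cap operation is realised by a single linear isomorphism compatible both with the $\Delta_b$-ground-state projector $\prod_{v\in\mathring{\Delta}_b,f\in\Delta_b}H_vH_f$ and with $h_s\otimes\id$, which in the direct formulation is exactly the content of the Lemma-5 analogue (recolourings inside $\Delta_s$ leave the holonomy class of the complement untouched, using that $\Delta_s$ is a ball) together with the preimage count of the Lemma-3/4 analogues. The attendant bookkeeping of normalisation constants — powers of $|G|$ counting removed vertices, and centralizer indices, all trivial or routine here — follows from the identities already verified for DW TQO1.
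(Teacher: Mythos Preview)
Your first route---the direct matrix-element computation in a basis of $V_b$ built from gauge-averaged flat colourings on the ball $\Delta_b$, indexed by boundary data plus an arbitrary exterior colouring $\tau$---is essentially the paper's argument. The paper streamlines it in two ways: it first factors $V_b=V_{b,tr}\otimes L_{c,b}$ and discards the $\tau$-tensorand at the outset, and then, rather than computing inner products and invoking off-diagonal/diagonal lemmas in the style of TQO1, it writes a generic $x\in V_{b,tr}$ as a gauge average, applies $h_{s,f}\otimes\id$ followed by $\prod_v H_v$, and reads off directly that the result is $c\cdot x$ with $c=|G|^{1-|V_s|}\sum_{k'}\sum_{\tilde k|_{\partial}=k'|_{\partial}}a_{k'}^{\tilde k}$. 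Your Lemma~3/4/5 analogues are implicit in that calculation (the key input being that on the ball $\Delta_s$ all flat colourings have trivial holonomy, so the $(i,j)$ indices collapse to $\unit$), so the content is the same; the paper's version is just shorter.

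Your second route---the cut-and-cap via $re$ and $G$, reducing to $h_{S^n}\otimes\id$ on $V(\Delta_s\cup D_\alpha)\otimes V(\cdots)$---is \emph{not} what the paper does here: that machinery is used for the Levin--Wen TQO2 but is bypassed for DW precisely because the direct group-theoretic computation above is available. Your cut-and-cap would also work, and has the virtue of making the disk-axiom origin of the scalar manifest, but it imports more apparatus than the DW case needs.
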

\begin{proof}
Without loss of generality, we assume that $\Delta_s$ is homeomorphic to a disk inside $\Delta_b$. We can rewrite $L,V_b$ as follows
$$L=L_b\otimes L_{c,b}
$$
$$V_b=\prod_{v\in\mathring{\Delta}_b,f\in\Delta_b}H_vH_f(L_b)\otimes L_{c,b}
$$
Since $h_s\otimes id,\prod_{v\in\mathring{\Delta_b},f\in\Delta_b}H_vH_f$ are supported in the first factors, all we need is to consider the following truncation diagram
$$V_{b,tr}\stackrel{i_b}{\longrightarrow}L_s\otimes L_{b\backslash s}\stackrel{h_s\otimes id}{\longrightarrow}L_s\otimes L_{b\backslash s}\stackrel{\prod_{v\in\mathring{\Delta}_b,f\in\Delta_b}H_vH_f}{\longrightarrow}V_{b,tr}
$$

Let $V_{*,f}=\prod_{f\in\Delta_*}H_f(L_*)$ where $*=s,b,\overline{b\backslash s}$. We have the decomposition $V_{s,f}\otimes V_{\overline{b\backslash s},f}=V_{b,f}\oplus W$ where $W$ is spanned by the colorings on $\Delta_s,\Delta_{\overline{b\backslash s}}$ whose restriction on the boundary are different. For any $h_s\in End(L_s)$, we define $h_{s,f}\in End(V_{b,f})$ by 
$$V_{b,f}\stackrel{i_{b,f}}{\longrightarrow}L_s\stackrel{h_s}{\longrightarrow}L_s\stackrel{\prod_{f\in\Delta_s}H_f}{\longrightarrow}V_{b,f}
$$
Let $V_{s,f}=\prod_{f\in\Delta_b}H_f(L_b)$. Then we have the following commuting diagram
$$\xymatrix{
&L_s\otimes L_{b\backslash s}\ar[r]^{h_s\otimes id}&L_s\otimes L_{b\backslash s}\ar[dr]^{\prod_{f\in\Delta_b}H_f}& 
\\
V_{b,f}\ar[ur]^{i_{b,tr}}\ar[dr]^{i}& & &V_{b,f}
\\
&V_{s,f}\otimes V_{\overline{b\backslash s},f}\ar[r]^{h_{s,f}\otimes id}&V_{s,f}\otimes V_{\overline{b\backslash s},f}\ar[ur]^p&
}
$$
where the argument is the same as Lemma 3.

Similarly as Proposition 2, $V_{s,f}$ is spanned by $e_{i,j,k}$, where $i$ represents the conjugate class of $\pi_1(\Delta_s)\longrightarrow G$, $j$ represents holonomy map in $i$ and $k$ represents coloring for $j$. Since $\Delta_s$ is homeomorphic to a $n$-ball, $i,j$ are trivial denoted by $\unit$. Then $h_{s,f}$ can be written in matrix form $h_{s,f}(e_{\unit,\unit,k})=\sum_{k^{\prime}}a_k^{k^{\prime}}e_{\unit,\unit,k^{\prime}}$. For any $k^{\prime}$, since $\Delta_s$ is homeomorphic to a $n$-ball, there exist exactly $|G|$ $(g_1,...,g_{V_s})\in G^{V_s}$ such that 
$$(g_1\cdots g_{V_s})\cdot (e_{\unit,\unit,k}\otimes e)=e_{\unit,\unit,k^{\prime}}\otimes e_{k^{\prime}}
$$.
Any $x\in V_{b,tr}$ can be written as follows
\begin{align*}
x&=\frac{1}{|G|^{V_s}}\sum_{g_1,...,g_{V_s}}(g_1\cdots g_{V_s})\cdot(\bigotimes_{e_s\in\Delta_s}g_{e_s}\otimes\bigotimes_{e\in\Delta_{\overline{b\backslash s}}}g_e)
\\
&=\frac{1}{|G|^{V_s}}\sum_{g_1,...,g_{V_s}}(g_1\cdots g_{V_s})\cdot (e_{\unit,\unit,k}\otimes e)
\\
&=\frac{1}{|G|^{V_s}}\sum_{k^{\prime}}e_{\unit,\unit,k^{\prime}}\otimes\sum_{i=1}^{|G|} e_{k^{\prime},i}
\end{align*}
where the coloring is compatible with cocycle condition. Then we have
\begin{align*}
h_{b,tr}(x)&=\frac{1}{|G|^{V_s}}\sum_{k^{\prime}}\sum_{i=1}^{|G|}(\prod_{v\in\Delta_s}H_v)(p(h_{s,f}(e_{\unit,\unit,k^{\prime}})\otimes e_{k^{\prime},i}))
\\
&=\frac{1}{|G|^{V_s}}\sum_{k^{\prime}}\sum_{i=1}^{|G|}(\prod_{v\in\Delta_s}H_v)(p(\sum_{\tilde{k}}a_{k^{\prime}}^{\tilde{k}}e_{\unit,\unit,\tilde{k}}\otimes e_{k^{\prime},i}))
\\
&=\frac{1}{|G|^{V_s}}\sum_{k^{\prime}}\sum_{i=1}^{|G|}\sum_{\tilde{k}|_{\partial}=k^{\prime}|_{\partial}}a_{k^{\prime}}^{\tilde{k}}(\prod_{v\in\Delta_s}H_v)(e_{\unit,\unit,\tilde{k}}\otimes e_{k^{\prime},i})
\\
&=\frac{1}{|G|^{V_s}}\sum_{k^{\prime}}\sum_{i=1}^{|G|}\sum_{\tilde{k}|_{\partial}=k^{\prime}|_{\partial}}a_{k^{\prime}}^{\tilde{k}}\frac{1}{|G|^{V_s}}\sum_{\bar{k}}e_{\unit,\unit,\bar{k}}\otimes\sum_{j=1}^{|G|}e_{k^{\prime},i,\bar{k},j}
\\
&=(\frac{1}{|G|^{V_s-1}}\sum_{k^{\prime}}\sum_{\tilde{k}|_{\partial}=k^{\prime}|_{\partial}}a_{k^{\prime}}^{\tilde{k}})\cdot\frac{1}{|G|^{V_s}}\sum_{\bar{k}}\sum_{j=1}^{|G|}e_{\unit,\unit,\bar{k}}\otimes e_{\bar{k},j}
\end{align*}
We get $c=\frac{1}{|G|^{V_s-1}}\sum_{k^{\prime}}\sum_{\tilde{k}|_{\partial}=k^{\prime}|_{\partial}}a_{k^{\prime}}^{\tilde{k}}$ which does not depend on $x$.
\end{proof}

\printbibliography

\end{document}